\newtheorem{proposition}{Proposition}
\newtheorem{corollary}{Corollary}
\theoremstyle{definition}
\newcommand{\bra}[1]{\langle #1|}
\newcommand{\ket}[1]{| #1 \rangle }
\newcommand{\tr}[1]{{\rm tr}[#1]}
\newcommand{\be}{\begin{eqnarray}}
\newcommand{\ee}{\end{eqnarray}}
\newcommand{\stirling}[2]{\genfrac{\lbrace}{\rbrace}{0pt}{}{#1}{#2}}
\newcommand{\cE}{{\cal E}}
\newcommand{\cD}{{\cal D}}
\newcommand{\cM}{{\cal M}}
\newcommand{\cT}{{\cal T}}
\newcommand{\cS}{{\cal S}}
\newcommand{\cH}{{\cal H}}
\newcommand{\cO}{{\cal O}}
\newcommand{\cP}{{\cal P}}
\begin{document}

\title{Dissociation and annihilation of multipartite entanglement structure \\ in dissipative quantum dynamics}

\author{Sergey N. Filippov}

\affiliation{Research Center for Quantum Information, Institute of
Physics, Slovak Academy of Sciences, D\'{u}bravsk\'{a} cesta 9,
Bratislava 84511, Slovakia}

\affiliation{Department of Experimental Physics, Comenius
University, Mlynsk\'{a} dolina, Bratislava 84248, Slovakia}

\affiliation{Department of Theoretical Physics, Moscow Institute
of Physics and Technology (State University), Institutskii
Pereulok 9, Dolgoprudny, Moscow Region 141700, Russia}

\affiliation{Institute of Physics and Technology, Russian Academy
of Sciences, Nakhimovskii Pr. 34, Moscow 117218, Russia}

\author{Alexey A. Melnikov}

\affiliation{Institute of Physics and Technology, Russian Academy
of Sciences, Nakhimovskii Pr. 34, Moscow 117218, Russia}

\affiliation{Institute for Quantum Optics and Quantum Information,
Austrian Academy of Sciences,\\ Technikerstra{\ss}e 21a, Innsbruck
6020, Austria}

\affiliation{Institute for Theoretical Physics, University of
Innsbruck, Technikerstra{\ss}e 25, Innsbruck 6020, Austria}

\author{M\'{a}rio Ziman}

\affiliation{Research Center for Quantum Information, Institute of
Physics, Slovak Academy of Sciences, D\'{u}bravsk\'{a} cesta 9,
Bratislava 84511, Slovakia}

\affiliation{Faculty of Informatics, Masaryk University,
Botanick\'{a} 68a, Brno 60200, Czech Republic}

\begin{abstract}
We study the dynamics of the entanglement structure of a
multipartite system experiencing a dissipative evolution. We
characterize the processes leading to a particular form of output
system entanglement and provide a recipe for their identification
via concatenations of particular linear maps with
entanglement-breaking operations. We illustrate the applicability
of our approach by considering local and global depolarizing
noises acting on general multiqubit states. A difference in the
typical entanglement behavior of systems subjected to these noises
is observed: the originally genuine entanglement dissociates by
splitting off particles one by one in the case of local noise,
whereas intermediate stages of entanglement clustering are present
in the case of global noise. We also analyze the definitive phase
of evolution when the annihilation of the entanglement compound
finally takes place.
\end{abstract}

\pacs{03.67.Bg, 03.65.Ud, 03.65.Yz, 03.67.Mn}

\maketitle

\section{Introduction}

The physical phenomenon of entanglement naturally appears in
composite quantum systems via interactions among constituents.
Simple collision models already teach us that different
interaction types lead to various types of multipartite
entanglement \cite{ziman-buzek-2011}. Systems with local
Hamiltonians exhibit correlations between the degree of
entanglement and eigenenergies \cite{li-haldane-2008,pizorn-2013},
phase transitions \cite{osborne-2002,osterloh-2002,hofmann-2013},
and the number of interacting bodies \cite{levi-mintert-2013}.
Multipartite entanglement finds uses in quantum networking
applications such as secret sharing \cite{hillery-1999}, secret
voting \cite{hillery-2006}, open-destination teleportation
\cite{karlsson-1998}, etc. For the latter purposes, entanglement
can be created within the system not only by interaction among
constituent bodies but also by a properly engineered interaction
with the environment \cite{palma-1989,marr-2003,lucas-2013}.

Suppose the prepared multipartite entangled state is intended for
use in an entanglement-enabled quantum protocol involving remote
clients. While transferring the quantum information to recipients,
the state will be modified by inevitable noise processes. It can
happen that the type of multipartite entanglement received by the
clients differs significantly from the original one, and the
realization of the desired protocol becomes impossible. Similarly,
uncontrollable noise processes in quantum memory devices can
result in destroying particular correlations within the stored
multipartite system and make the released state ineffective
\cite{pastawski-2009,simon-2010}. Degradation of entanglement also
imposes limitations on the benefit of advanced quantum metrology
relying on genuinely multipartite entangled states
\cite{giovannetti-2011}. These examples demonstrate the necessity
of tracking the multipartite entanglement dynamics and finding
noise levels corresponding to the change of entanglement type.

Previous efforts in this direction relied on specific entanglement
measures. Negativity \cite{vidal-2002} --- a measure detecting
negativity of the density matrix under partial transpose (NPT)
\cite{peres-1996} --- was originally used by Simon \textit{et al}.
\cite{simon-2002} and D\"{u}r \textit{et al}. \cite{dur-2004} to
analyze GHZ, W, and cluster states under local depolarizing noise.
Then Bandyopadhyay \textit{et al}. \cite{bandyopadhyay-2005} and
Hein \textit{et al}. \cite{hein-2005} utilized it to study the
behavior of GHZ states and graph states, respectively, under
general local homogeneous noise. Generalized GHZ-type states under
a local amplitude-damping channel were considered with the help of
negativity by Man \textit{et al}. \cite{man-2008}. Aolita
\textit{et al}. exploited negativity to study effects of local
depolarizing, dephasing, and generalized amplitude damping
channels on GHZ states \cite{aolita-2008} and graph states
\cite{aolita-2010}. Those results were obtained for an arbitrary
number of qubits (except for some graph states \cite{aolita-2010}
and randomly sampled states \cite{aolita-2009}) due to the
ultimate simplicity of negativity computation. Depolarization and
dephasing of qudit GHZ states were considered via negativity in
\cite{liu-2009}. Similarly, concatenated GHZ states (where blocks
of a small number of qubits are GHZ states themselves) were
considered in \cite{frowis-2011}. However, the negativity does not
provide comprehensive information about the entanglement structure
because it can be sensitive to the entanglement with respect to a
particular bipartition only (remember, e.g., bound-entangled PPT
states \cite{horodecki-1998} and biseparable but non-triseparable
states \cite{bennet-1999}).

The absence of full separability can also be detected by some
other measures. For instance, Carvalho \textit{et al}. used the
lower bound for a specific generalization of the concurrence and
applied it to the dynamics of several-qubit GHZ and W states under
amplitude-damping and dephasing local channels
\cite{carvalho-2004}. G\"{u}hne \textit{et al}. used the geometric
measure of entanglement \cite{barnum-2001,wei-2003} to study
global dephasing process of four-qubit GHZ, cluster, W, and Dicke
states \cite{guhne-2008}. Grimsmo \textit{et al}. used the
entropic measure for average $n$-partite entanglement over quantum
trajectories~\cite{grimsmo-2012}. Gheorghiu \textit{et al}.
developed the evolution of an averaged SL-invariant entanglement
measure for local decoherence \cite{gheorghiu-2012}. A similar
approach with a lower bound of the concurrence was exploited in
\cite{man-2012}. A non-zero value of these quantities indicates
the presence of some entanglement within the quantum system, but
gives little information about its particular form and, therefore,
the benefit of this entanglement for some applications remains
questionable. Moreover, vanishing values of the above measures
cannot guarantee the full separability of the state, and thus, the
problem of fundamental noise limits eliminating any form of
entanglement (resulting in fully separable states) is still open.

Genuine multipartite entanglement is the exact opposite of full
separability: this form of entanglement is intrinsically
multiparticle and cannot be attributed to the entanglement
distributed among smaller subsystems. The detection of genuine
entanglement for specific quantum states has been a subject of
intensive recent research (see, e.g.,
\cite{guhne-2010,huber-2010,devicente-2011,jungnitsch-2011,novo-2013,huber-2013}
and references therein). Dissipative evolution of genuine
multipartite entanglement has been analyzed with the help of some
measures. The mean value of a projector-like witness
\cite{bourennane-2004} was used by Bodoky \textit{et al}. to study
several qubits within a heuristic model of decoherence based on
local relaxation and dephasing times \cite{bodoky-2009}. Campbell
\textit{et al}. used fidelity- and collective spin-based
entanglement witnesses to analyze the dynamics of genuine
multipartite entanglement of Dicke states under local
amplitude-damping, phase-damping, and depolarizing channels
\cite{campbell-2009}. Tripartite negativity and generalized
concurrence were also applied to the dissipative dynamics of GHZ
and W three-qubit entangled states
\cite{weinstein-2009,altintas-2010,an-2011,siomau-2012,ryu-2012,buscemi-2013}.
Let us recall that the above measures are not precise, i.e., their
zero values do not imply in general that the genuine entanglement
is lost. On the other hand, precise measures (based on the convex
roof definitions) are quite hard to compute. This is the main
reason why the research in entanglement dynamics is usually
restricted to particular initial states (GHZ, W, X, Dicke, etc.)
and the use of relatively simple measures.

Despite existing results for noises preserving genuine
entanglement and entanglement on the whole (absence of full
separability), the evolution of entanglement structure still
remains unexplored.  The aim of this paper is to track the
transformations of entanglement structure during dissipative
processes. By ``structure'' we understand the number of separate
components and the number of particles within each of them (with
allowance for convex mixtures)
\cite{levi-mintert-2013,dur-1999,seevinck-2008}. This structure
resembles a Russian nested doll, and dissipative evolution maps
states from the outer to the inner dolls. The evolution of
entanglement structure can be seen as a dissociation of the
entanglement compound due to interaction with the ``solvent''
(particles of the environment). Note that the ``entanglement
compound'' refers to a genuinely entangled multipartite component
and differs from the concept of an ``entanglement molecule'' whose
bonds depict entanglement of reduced two-particle states
\cite{dur-2001}. The idea of tracking the entanglement structure
was realized for three-qubit GHZ states under global
depolarization in \cite{eltschka-2012} and for the restricted
Hilbert space of single-excitation states in
\cite{lougovski-2009}. We do not restrict ourselves to particular
input states and develop a theory of transformations that map
\textit{any} initial state into a chosen doll. Note that
mainstream research is focused on showing that a particular state
is outside a given doll (mostly that of biseparable states)
\cite{levi-mintert-2013,guhne-2010,huber-2010,devicente-2011,jungnitsch-2011,novo-2013,huber-2013,seevinck-2008},
whereas ours ensures the opposite and matches the recent approach
of Ref. \cite{kampermann-2012}. Our methodology relies on a neat
decomposition of the physical map into simpler (but not necessary
physical) processes involving entanglement-breaking operations
\cite{horodecki-2003}. The criteria obtained are formulated for
general quantum channels.

To illustrate our approach, we discuss examples of local and
global depolarizing noises modelling individual and common baths,
respectively. Local depolarizing noises are relevant in quantum
communication tasks (exploiting, e.g., optical fibers) as well as
in purely physical systems such as nuclear spins in molecules
\cite{emerson-2007}. Global depolarizing noise is an appropriate
model in experiments where full-rank quantum states are detected
\cite{barreiro-2010,lavoie-2010} and is argued to be the
worst-case scenario of system-environment interactions
\cite{dur-2005}. We find the noise levels of corresponding
entanglement structure dissociations and reveal differences in the
typical dissociation behavior between local and global noises.

The paper is organized as follows.

In Sec. \ref{section-formalism}, we precisely describe the
multipartite entanglement formalism used, with attention being
paid to higher order partitions (tripartitions, tetrapartitions,
etc.) which are often omitted from consideration. In Sec.
\ref{section-dynamics}, we recall the necessary information about
general and local quantum channels. In Sec. \ref{section-ea}, the
problem under investigation (dynamics of entanglement structure)
is precisely formulated. In Sec. \ref{section-methodology}, we
accomplish the development of methodology and derive the criteria
of entanglement dissociation and annihilation. In Sec.
\ref{section-application}, we provide a recipe for applying the
obtained criteria to the above-mentioned noises. In Sec.
\ref{section-discussion}, the physical meaning of the results is
discussed. In Sec. \ref{section-summary}, we concisely summarize
the ideas, methods, and achieved results.

\section{\label{section-formalism} Multipartite entanglement formalism}

To express the idea of entanglement structure quantitatively, one
can make use of the following formalism.

Whenever we speak about entanglement, we imply a particular
partition of the composite system. In general, an $N$-body system
$ABC\ldots$ can be partitioned into $k$ subsystems, where $k$
ranges from $2$ to $N$. If the system is not partitioned at all,
we will reckon $k=1$. One can divide the $N$-body system
$ABC\ldots$ into $k$ subsystems (also referred to as parties) in
$\stirling{N}{k}$ different ways, where $\stirling{N}{k} =
\frac{1}{k!} \sum_{m=0}^k (-1)^m \binom{k}{m} (k-m)^N$ is the
Stirling number of the second kind. Denote by $\cP^k$ a set of
possible partitions into $k$ parties. Partitions are ordered in
such a way that the parts with fewer bodies go first. Then, for a
three-body system $ABC$ we have $\cP^1(ABC) = \{ABC\}$,
$\cP^2(ABC) = \{A|BC,B|AC,C|AB\}$, and $\cP^3(ABC) = \{A|B|C\}$.
In the case of a four-body system $ABCD$, the sets of possible
partitions are $\cP^1(ABCD) = \{ABCD\}$, $\cP^2(ABCD) = \{A|BCD$,
$B|ACD$, $C|ABD$, $D|ABC$, $AB|CD$, $AC|BD$, $AD|BC\}$,
$\cP^3(ABCD) = \{A|B|CD$, $A|C|BD$, $A|D|BC$, $B|C|AD$, $B|D|AC$,
$C|D|AB\}$, and $\cP^4(ABCD) = \{A|B|C|D\}$. Denote by $\cP_j^k$
the $j$-th partition of the set $\cP^k$, e.g., $\cP_5^3(ABCD) =
B|D|AC$. In order to address the $m$th subsystem of the partition
$\cP_j^k$, we will use the notation $[\cP_j^k]_m$, e.g.,
$[\cP_5^3(ABCD)]_2 = D$.

Quantum states of the system $ABC\ldots$ are described by density
operators $\varrho^{ABC\ldots}$ (positive and with unit trace)
acting on the Hilbert space $\cH^{ABC\ldots} \equiv
\cH^{A}\otimes\cH^{B}\otimes\cH^{C}\otimes\ldots$ and altogether
forming the convex set $\cS(\cH^{ABC\ldots})$. A state $\varrho$
is called separable with respect to a particular partition
$\cP_j^k$ if the resolution $\varrho = \sum_i \mu_i
\varrho_i^{[\cP_j^k]_1} \otimes \cdots \otimes
\varrho_i^{[\cP_j^k]_k}$ holds true for some probability
distribution $\{\mu_i\}$ and density operators
$\varrho_i^{[\cP_j^k]_m}$, $m=1,\ldots,k$. We will denote such a
separable state as $\sigma_j^k$ for brevity. If $\varrho \ne
\sigma_j^k$ for any $\sigma_j^k$, then $\varrho$ is said to be
entangled with respect to the partition $\cP_j^k$.

The above consideration of partitions is important because the
physics of multipartite entanglement can be quite
counterintuitive. For instance, the three-qubit state of Refs.
\cite{bennet-1999,divincenzo-2003} is separable with respect to
any bipartition $\cP_j^2$ but is entangled with respect to
tripartition $\cP^3$. Another example is a four-qubit Smolin state
\cite{smolin-2001} which is separable with respect to bipartitions
$\cP_5^2$, $\cP_6^2$, $\cP_7^2$ and is entangled with respect to
bipartitions $\cP_1^2$, $\cP_2^2$, $\cP_3^2$, $\cP_4^2$, any
tripartition $\cP_j^3$, and quartering $\cP^4$.

Now we can define the concept of $k$-separability of a quantum
state, which indicates that the state can accommodate components
each of which has $k$ separate parties. Namely, the state
$\varrho$ is called $k$-separable and denoted
$\varrho_{k\text{-sep}}$ if it adopts the resolution
$\sum_{j=1}^{\stirling{N}{k}} p_j^k \sigma_j^k$ for some
probability distribution $\{p_j^k\}_j$ and separable density
operators $\sigma_j^k$. Note that $\varrho_{k\text{-sep}}$ can
still be entangled with respect to partitions $\cP_j^k$ if
$\stirling{N}{k}
> 1$. Clearly, if the state is $k$-separable, then it is also
$(k-1)$-separable, which implies the inclusion relation
$\cS_{N\text{-sep}} \subset \ldots \subset \cS_{2\text{-sep}}
\subset \cS_{1\text{-sep}}$ for convex sets of $k$-separable
states. A natural measure of separability appears:
\begin{equation}
\label{K-separability} K_{\rm sep} [\varrho] :=
\max\limits_{\varrho = \varrho_{k\text{-sep}}} k.
\end{equation}

\noindent If $K_{\rm sep} [\varrho] = 1$, then the state $\varrho$
is called genuinely entangled (GE). If $K_{\rm sep} [\varrho] =
N$, then the state $\varrho$ is fully separable (FS).

One can quantify multipartite entanglement in an alternative way
by counting the number of bodies that are actually
entangled~\cite{levi-mintert-2013,seevinck-2008,vedral-1997,gisin-1998}.
This number would indicate the resources needed to create the
state. For instance, the state $\varrho^{AB} \otimes
\varrho^{CDE}$ of a 5-body system $ABCDE$ is 2-separable but
comprises a party $CDE$ which can be genuinely entangled ($K_{\rm
sep}[\varrho^{CDE}]=1$), i.e., requires 3 bodies to be entangled.
To embody this idea in a precise manner, we introduce the
following definition of resource-intensiveness (compatible with
the concepts of entanglement depth~\cite{sorensen-2001} and
producibility~\cite{guhne-toth-briegel-2005}):
\begin{equation}
\label{R-entanglement}  R_{\text{ent}} [\varrho] := \!\!\!
\min\limits_{\varrho = \sum\limits_{k=1}^N
\sum\limits_{j=1}^{\stirling{N}{k}} p_j^k \sigma_j^k}
\max\limits_{m=1,\ldots,k} \! \left\{ \#~\text{bodies within}~
[\cP_j^k]_m \right\} \! .
\end{equation}

Denote by $\cS_{r\text{-ent}} = \{ \varrho : R_{\rm ent} [\varrho]
\le r\}$ the convex set of $r$-entangled states. Obviously,
$\cS_{1\text{-ent}} \subset \cS_{2\text{-ent}} \subset \ldots
\subset \cS_{N\text{-ent}}$. Importantly, $\cS_{1\text{-ent}} =
\cS_{N\text{-sep}}$, $\cS_{(N-1)\text{-ent}} =
\cS_{2\text{-sep}}$, and $\cS_{N\text{-ent}} = \cS_{1\text{-sep}}
= \cS(\cH^{ABC\ldots})$. Depending on the quantum state, the range
of $R_{\text{ent}}$ can be $\lceil \frac{N}{K_{\text{sep}}} ,
N-K_{\text{sep}}+1]$ for a fixed $K_{\text{sep}}$, and the range
of $K_{\text{sep}}$ can be $\lceil \frac{N}{R_{\text{ent}}} ,
N-R_{\text{ent}}+1]$ for a fixed $R_{\text{ent}}$
\footnote{Hereafter, $\lceil x \rceil$ denotes the smallest
integer greater than or equal to $x$, and $\lfloor x \rfloor$
denotes the greatest integer less than or equal to $x$.}. The
relations between two families of sets $\{\cS_{k\text{-sep}}\}$
and $\{\cS_{r\text{-ent}}\}$ for a four-body system are shown in
Fig.~\ref{figure1}.

\begin{figure}
\includegraphics[width=8.5cm]{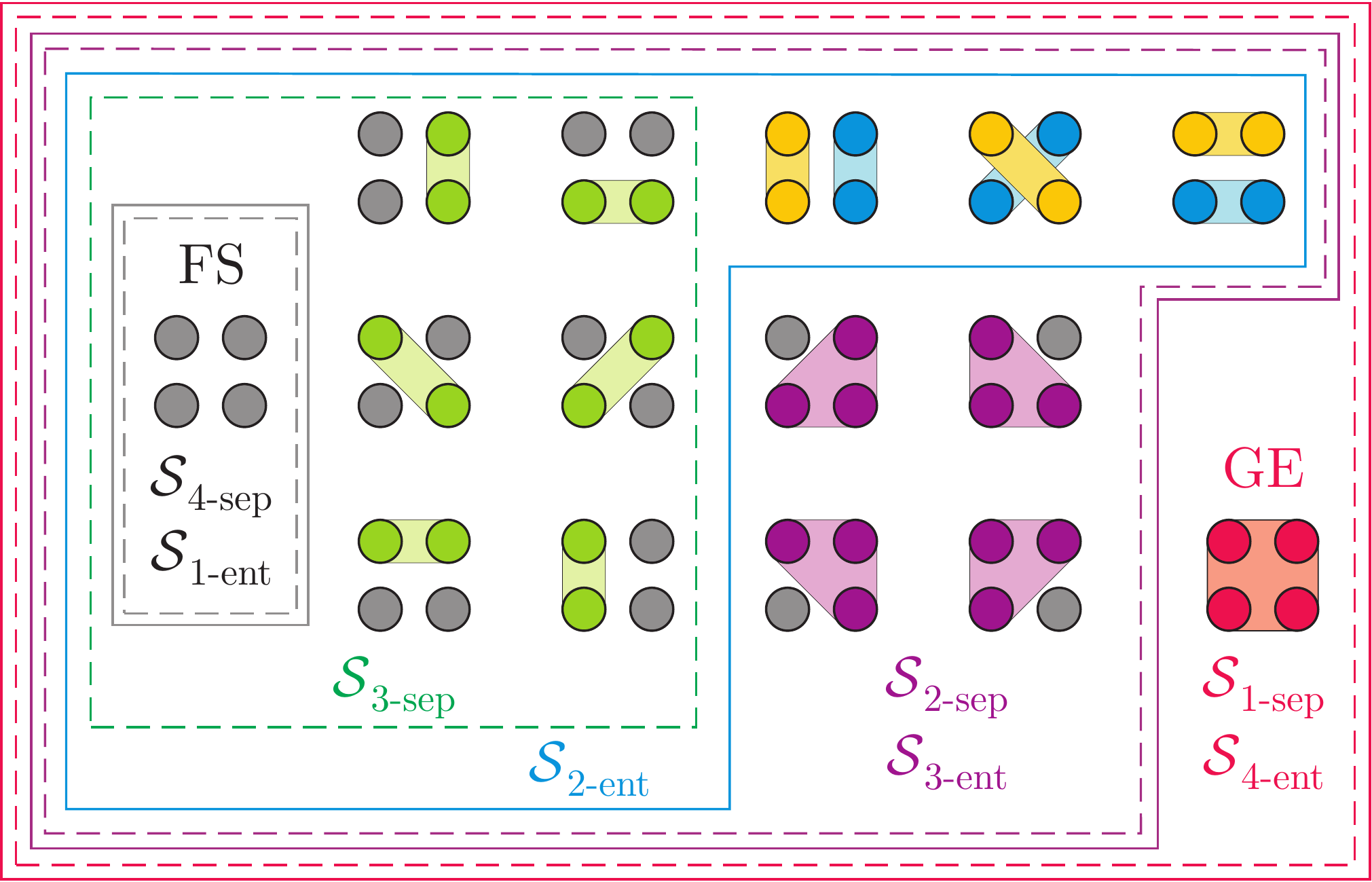}
\caption{\label{figure1} Schematic of sets $\cS_{k\text{-sep}}$
(dashed) and $\cS_{r\text{-ent}}$ (solid) for a four-body system.}
\end{figure}

\section{\label{section-dynamics} Quantum dynamics}
We describe the physical evolution of open quantum systems by the
input--output formalism of quantum channels: $\varrho_{\text{out}}
= \Phi[\varrho_{\text{in}}]$, where $\Phi:\cT(\cH_{\rm
in})\to\cT(\cH_{\rm out})$ is a completely positive
trace-preserving (CPT) linear map on trace-class operators
$\cT(\cH_{\rm in})$. The physical meaning of the evolution via a
CPT map $\Phi$ can be readily seen from the Stinespring
dilation~\cite{stinespring}: $\Phi[\varrho_{\text{in}}] \equiv
{\rm tr}_{\rm env} [U (\varrho_{\rm in} \otimes \xi_{\rm env})
U^{\dag}]$ for some state of the environment $\xi_{\rm env}$ and
some unitary operator $U\in\cT(\cH_{\rm in}\otimes\cH_{\rm env})$.
Complete positivity (CP) of the map $\Phi$ acting on a system $S$
guarantees that $(\Phi^{S}\otimes{\rm Id}^{\rm
anc})[\varrho^{S+\text{anc}}] \ge 0$ for all composite states
$\varrho^{S+\text{anc}}\in\cS(\cH^{S+\text{anc}})$ of the system
$S$ and an arbitrary ancilla, with ${\rm Id}$ being the identity
transformation. Equivalently, the map $\Phi$ is CP if it adopts
the diagonal sum representation $\Phi[X] = \sum_k A_k X
A_k^{\dag}$. If the Kraus operators $A_k:\cH_{\rm in} \mapsto
\cH_{\rm out}$ satisfy $\sum_k A_k^{\dag} A_k = I_{\rm in}$
(identity operator), then $\Phi$ is CPT.

In order to define a linear map $\Phi$ acting on a system $S$, we
will use the Choi--Jamio{\l}kowski
isomorphism~\cite{choi-1975,jamiolkowski-1972}:
\begin{eqnarray}
&& \label{choi-matrix} \Omega_{\Phi}^{SS'} := (\Phi^S \otimes {\rm
Id}^{S'})[\ket{\Psi_+^{SS'}}\bra{\Psi_+^{SS'}}], \\
&& \label{map-through-choi} \Phi [X] = d^S \, {\rm tr}_{S'} \, [
\,\Omega_{\Phi}^{SS'} (I_{\rm out}^S \otimes X^{\rm T}) \, ],
\end{eqnarray}

\noindent where $d={\rm dim}\cH$, $\ket{\Psi_+^{SS'}} =
(d^S)^{-1/2} \sum_{i=1}^{d^S} \ket{i\otimes i'}$ is a maximally
entangled state shared by system $S$ and its clone $S'$, $X^{\rm
T}=\sum_{i,j} \bra{j} X \ket{i} \ket{i'}\bra{j'} \in \cT(\cH_{\rm
in}^{S'})$ is the transposition in some orthonormal basis, and
${\rm tr}_{S'}$ denotes the partial trace over $S'$. The linear
map $\Phi^S$ is CP if and only if $\Omega_{\Phi}^{SS'} \ge 0$.

Since our main interest is focused on many-body systems, let us
consider a composite system $S=ABC\ldots$ acted upon by some
channel $\Phi^{S}$. To begin with, $\ket{\Psi_+^{SS'}} = (d^A d^B
d^C \cdots )^{-1/2} \sum\limits_{i=1}^{d^A}
\sum\limits_{j=1}^{d^B} \sum\limits_{k=1}^{d^C}
\sum\limits_{\cdots} \ket{ijk\cdots}\otimes \ket{i'j'k'\cdots} =
\ket{\Psi_+^{AA'}} \otimes \ket{\Psi_+^{BB'}} \otimes
\ket{\Psi_+^{CC'}} \otimes \cdots$, which explicitly shows the
separability of the maximally entangled state with respect to the
partition $AA'|BB'|CC'|\ldots$. While constructing the Choi
operator (\ref{choi-matrix}), the map $\Phi^{ABC\ldots}$ can in
general entangle these subsystems.

Suppose a local channel $\Phi_1^A \otimes \Phi_2^B \otimes
\Phi_3^C \otimes \cdots$ which serves as an adequate model in
situations when each particle is sent to a corresponding receiver
through an individual quantum cable (Fig.~\ref{figure2}a). In this
case, $\Omega_{\Phi_1 \otimes \Phi_2 \otimes \Phi_3 \otimes
\ldots}^{ABC \ldots A'B'C' \ldots} = \Omega_{\Phi_1}^{AA'} \otimes
\Omega_{\Phi_2}^{BB'} \otimes \Omega_{\Phi_3}^{CC'} \otimes
\cdots$. Clearly, $\Phi_1^A \otimes \Phi_2^B \otimes \Phi_3^C
\otimes \cdots$ is CP if and only if each of the maps $\Phi_1^A$,
$\Phi_2^B$, $\Phi_3^C$, $\ldots$ is CP.

\begin{figure}
\includegraphics[width=8.5cm]{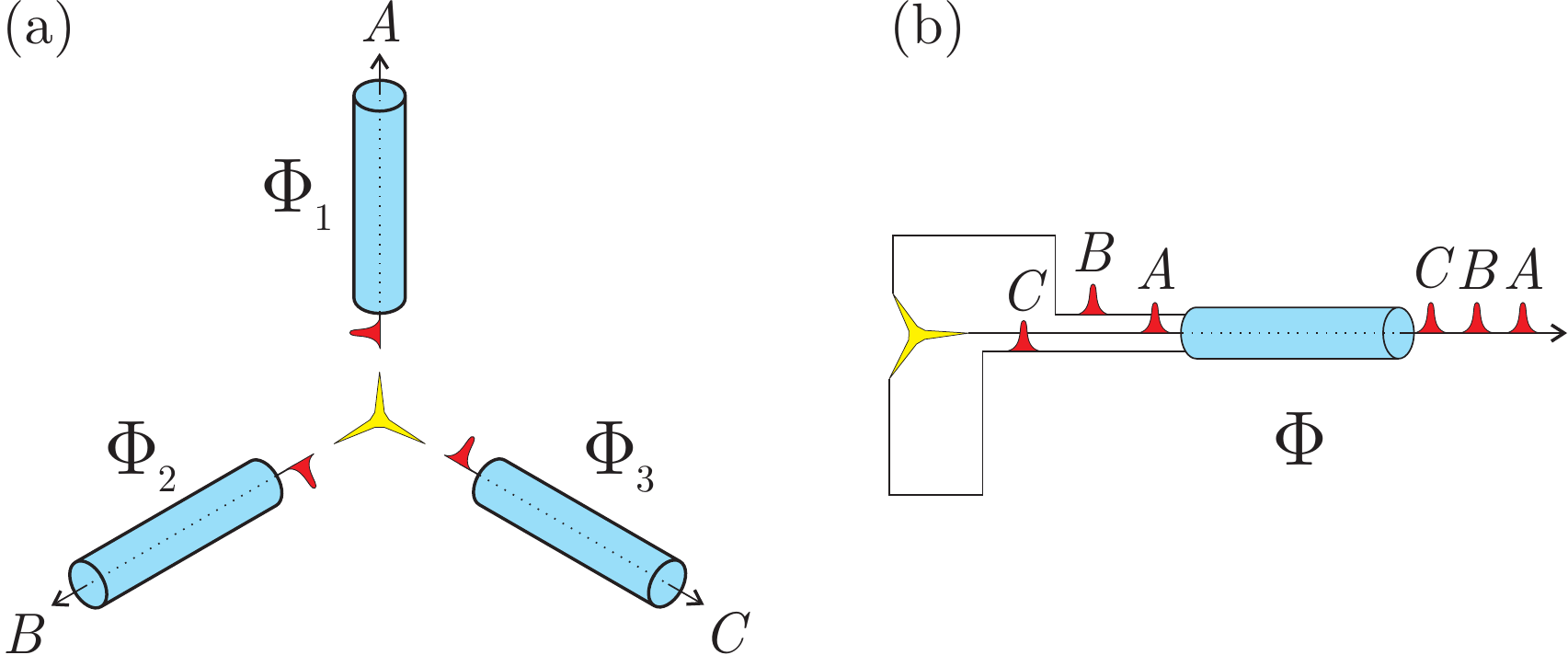}
\caption{\label{figure2} Local channels: (a) general; (b)
homogeneous.}
\end{figure}
In quantum communication, the typical scenario is to use a single
quantum cable to transmit time-separated parties of a multipartite
state from the encoder to the decoder (Fig.~\ref{figure2}b).
Neglecting the memory effects, the evolution of a multipartite
system is governed by the homogeneous local channel $\Phi^{\otimes
N}$, which also appears in the definition of channel capacities
(see, e.g., the review~\cite{holevo-giovannetti}).

\section{\label{section-ea} Problem formulation}

Consider a composite $N$-body system $S=ABC\ldots$ that undergoes
the physical evolution $\varrho_{\rm out} = \Phi [\varrho_{\rm
in}]$ determined by some CPT map $\Phi$ (we also assume $\cH_{\rm
in}^S = \cH_{\rm out}^S$). If $\varrho_{\rm out}$ is separable
with respect to the partition $\cP_j^k$ (i.e. $\varrho_{\rm
out}=\sigma_j^k$), then we say that the channel $\Phi$ dissociates
the entanglement compound of a given $\varrho_{\rm in}$ into
smaller compounds of $[\cP_j^k]_1,\ldots,[\cP_j^k]_k$ and denote
by $\cD_j^k (\varrho_{\rm in})$ the set of such channels. If the
channel $\Phi$ dissociates the entanglement of all input states
$\varrho_{\rm in} \in \cS(\cH^S)$ in this way, then we will refer
to $\Phi$ as dissociating entanglement with respect to the
partition $\cP_j^k$ and denote $\Phi \in \cD_j^k \equiv
\bigcap_{\varrho_{\rm in} \in \cS(\cH^S)} \cD_j^k (\varrho_{\rm
in})$.

Using entanglement measures (\ref{K-separability}) and
(\ref{R-entanglement}), we can quantitatively describe the
processes of entanglement structure dynamics. Namely, denote by
$k\text{Sep}(\varrho_{\rm in})$ a set of channels $\Phi$ such that
$K_{\rm sep} \left[ \Phi[\varrho_{\rm in}] \right] \ge k$. By
construction, $k\text{Sep} (\varrho_{\rm in})$ is a convex hull of
the sets $\cD_j^k (\varrho_{\rm in})$. Similarly, $r\text{Ent}
(\varrho_{\rm in})$ is a set of channels $\Phi$ such that $R_{\rm
ent} \left[ \Phi[\varrho_{\rm in}] \right] \le r$. Regarding
state-independent properties, we straightforwardly introduce the
sets of channels $k\text{Sep} := \bigcap_{\varrho_{\rm in} \in
\cS(\cH^S)} k\text{Sep} (\varrho_{\rm in})$ and $r\text{Ent} :=
\bigcap_{\varrho_{\rm in} \in \cS(\cH^S)} r\text{Ent}
(\varrho_{\rm in})$. The developed formalism of
Sec.~\ref{section-formalism} immediately results in the following
inclusion diagram for the above sets:
\begin{equation}
\begin{array}{ccccccccc}
  N\text{Sep} & \subset & (N-1)\text{Sep} &\subset & \cdots & \subset & 2\text{Sep} & \subset & 1\text{Sep} \\
  \parallel &   &   &   &   &   & \parallel & & \parallel \\
  1\text{Ent} & \subset & 2\text{Ent} & \subset & \cdots & \subset & (N-1)\text{Ent} & \subset & N\text{Ent} \\
  \parallel &   &   &   &   &   & \parallel & & \parallel \\
  \text{EA} &   &   &   &   &   & \text{DGE} & & \text{CPT} \\
\end{array} \nonumber
\end{equation}

\noindent We have used a special notation for two distinctive
classes of channels:
\begin{itemize}
\item entanglement annihilating channels (EA) transforming any
input state into a fully separable one
\cite{moravcikova-ziman-2010};

\item channels that dissociate genuine entanglement (DGE), thus,
transforming genuinely entangled states into non-genuinely
entangled ones.
\end{itemize}

The problem under investigation is twofold: (i) to characterize
the sets of channels $k\text{Sep}(\varrho_{\rm in})$ and
$r\text{Ent}(\varrho_{\rm in})$ as well as state-independent sets
from the above diagram, (ii) to track how exactly the
multiparticle entanglement structure dissociates under particular
noises. Our special attention is paid to EA and DGE channels.

Before proceeding to the derivation of criteria, we need to
clarify the relation between the problem involved and the well
known approaches developed so far.

Consider a (not necessarily composite) system $S$ acted upon by a
channel $\Phi:\cT(\cH_{\rm in}^S) \mapsto \cT(\cH_{\rm out}^S)$.
If the Choi operator $\Omega_{\Phi}^{SS'}$ is separable with
respect to the partition $S|S'$, then $\Phi$ is a so-called
entanglement-breaking (EB) map \cite{horodecki-2003,holevo-2008},
whose peculiarity is that $(\Phi^S\otimes{\rm Id}^{\rm
anc})[\varrho^{S+{\rm anc}}]$ is separable with respect to the
partition $S|{\rm anc}$ for all density operators $\varrho^{S+{\rm
anc}} \in \cS(\cH^{S+\text{anc}})$. In fact, separability of
$\Omega_{\Phi}^{SS'}$ implies that $\Phi$ has the Holevo form
$\Phi[X] = \sum_k \tr{F_k X} \omega_k$, where $\{F_k\}$ is a
positive operator-valued measure and $\omega_k \in \cS(\cH_{\rm
out})$, i.e. $\Phi$ is a measure-and-prepare procedure. The latter
representation, in its turn, implies \cite{horodecki-2003} that
there exists a diagonal sum representation with rank-1 Kraus
operators $A_k \propto \ket{\varphi_k} \bra{\psi_k}$ with
$\ket{\psi_k}\in\cH_{\rm in}^S$ and $\ket{\varphi_k}\in\cH_{\rm
out}^S$.

As concerns a composite system $S=ABC\ldots$, the EB channel
$\Phi^{S}$ disentangles $S$ from any other system but can in
principle result in any entanglement dynamics within $S$ (among
$A$, $B$, $C$, $\ldots$). For instance, the output state can be
genuinely entangled or fully separable depending on the
entanglement of vectors $\ket{\varphi_k}$ constituting Kraus
operators. However, the local channel $\Phi^S = \Phi_1^A \otimes
\Phi_2^B \otimes \Phi_3^C \otimes \cdots$ is entanglement breaking
if and only if each of the channels $\Phi_1^A$, $\Phi_2^B$,
$\Phi_3^C$, $\ldots$ is entanglement breaking. This can be readily
seen from the requirement of separability of the Choi operator
$\Omega_{\Phi_1 \otimes \Phi_2 \otimes \Phi_3 \otimes \ldots}^{ABC
\ldots A'B'C' \ldots} = \Omega_{\Phi_1}^{AA'} \otimes
\Omega_{\Phi_2}^{BB'} \otimes \Omega_{\Phi_3}^{CC'} \otimes
\cdots$ with respect to the partition $ABC\ldots|A'B'C'\ldots$.
Thus, the local entanglement breaking channel is automatically
entanglement annihilating but the converse is not true. These and
other differences between entanglement breaking and entanglement
annihilating channels are discussed in
\cite{moravcikova-ziman-2010,filippov-rybar-ziman-2012,filippov-ziman-2013}.

\begin{figure*}
\includegraphics[width=18cm]{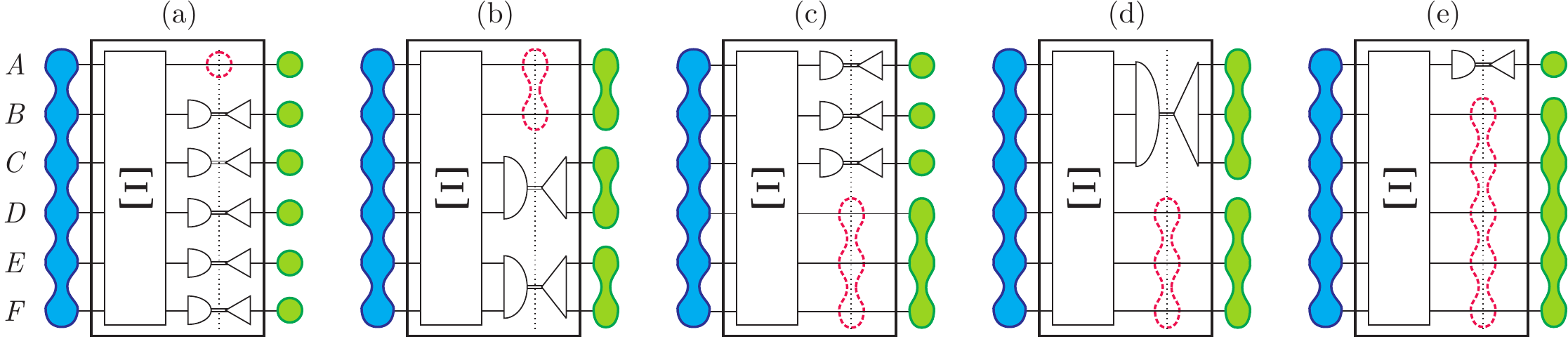}
\caption{\label{figure3} Elementary blocks of entanglement
dissociation for the 6-body system $ABCDEF$ constructed via
concatenation of a linear Hermitian map $\Xi$ and
measure-and-prepare (EB) operations. Semicircles and triangles
depict projections onto $\ket{\psi_n}$ and preparations of
$\ket{\varphi_n}$ of Kraus operators
$A_n\propto\ket{\varphi_n}\bra{\psi_n}$, respectively, and double
lines depict the classical information transfer. Only one
restriction is imposed: $\Xi[\varrho_{\rm in}]$ becomes
positive-semidefinite after performing the ``measure''-part of EB
operations (red dotted compound). Partitions: (a) $A|B|C|D|E|F$,
(b) $AB|CD|EF$, (c) $A|B|C|DEF$, (d) $ABC|DEF$, (e) $A|BCDEF$.}
\end{figure*}

\section{\label{section-methodology} Methodology and criteria}

In this section, we provide criteria to detect the different kinds
of entanglement dissociation discussed above. We start with a
description of our methodology which is based on an extensive use
of various convex sets of operators and maps.

In addition to quantum states described by positive semidefinite
unit trace operators $\varrho\in\cS(\cH^{ABC\ldots})$, an
important role will be played by block-positive operators
\cite{jamiolkowski-1972}. The operator $\xi_j^k$ is called
block-positive with respect to the partition $\cP_j^k$ if it
satisfies
$$\bra{\psi_1^{[\cP_j^k]_1} \otimes \cdots \otimes
\psi_k^{[\cP_j^k]_k}} \xi_j^k \ket{\psi_1^{[\cP_j^k]_1} \otimes
\cdots \otimes \psi_k^{[\cP_j^k]_k}} \ge 0$$ for all vectors
$\psi_1,\ldots,\psi_k$. Block-positive operators are closely
related to entanglement witnesses
\cite{horodecki-1996,horodecki-2001} and can be used to determine
separability: a state $\varrho\in\cS(\cH^{ABC\ldots})$ is
separable with respect to the partition $\cP_j^k$ if and only if
$\tr{ \varrho \xi_j^k } \ge 0$ for all block-positive operators
$\xi_j^k$.

We must emphasize that the concepts of entanglement dissociation
and annihilation from Sec. \ref{section-ea} do not imply any
ancillary system besides the multipartite system $S=ABC\ldots$
itself. This allows to relax CPT condition of the physical
transformation $\Phi$ and construct an extended set $\cE[\Phi]$ of
(mathematical) linear maps $\Upsilon$ having the same entanglement
behavior as $\Phi$ on the corresponding domain of input states.
For example, the extended set $\cE[\cD_j^k(\varrho_{\rm in})]$
consists of linear maps $\Upsilon$ satisfying the only restriction
that $\Upsilon[\varrho_{\rm in}]$ is equal to some $\sigma_j^k$.
Similarly, $\cE [k\text{Sep}(\varrho_{\rm in})]$ and $\cE
[r\text{Ent}(\varrho_{\rm in})]$ denote the extensions of sets
$k\text{Sep}(\varrho_{\rm in})$ and $r\text{Ent}(\varrho_{\rm
in})$, respectively. As we show later, the extensions turn out to
be useful because they adopt a good characterization. The original
set of maps can be found by intersecting with CPT maps, e.g.,
$\cD_j^k(\varrho_{\rm in}) = {\rm CPT} \cap
\cE[\cD_j^k(\varrho_{\rm in})]$.

\begin{proposition} \label{proposition-1}
Suppose a linear map $\Upsilon$ acting on a system $ABC\ldots$.
Then $\Upsilon \in \cE[\cD_j^k(\varrho_{\rm in})]$ if and only if
${\rm tr} \left[ \Omega_{\Upsilon}^{ABC\ldots A'B'C'\ldots} \left(
(\xi_j^k)^{ABC\ldots} \otimes (\varrho_{\rm in}^{\rm
T})^{A'B'C'\ldots} \right) \right] \ge 0$ for all $\xi_j^k$.
\end{proposition}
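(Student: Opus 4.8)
The plan is to strip off the definition of $\cE[\cD_j^k(\varrho_{\rm in})]$ and then transport the block-positive separability criterion recalled in this section through the Choi--Jamio{\l}kowski isomorphism. By definition $\Upsilon\in\cE[\cD_j^k(\varrho_{\rm in})]$ means that $\Upsilon[\varrho_{\rm in}]$ is a $\cP_j^k$-separable operator (a conic mixture $\sum_i\lambda_i\,\varrho_i^{[\cP_j^k]_1}\otimes\cdots\otimes\varrho_i^{[\cP_j^k]_k}$ with $\lambda_i\ge 0$), and, by the separability criterion, this holds if and only if $\tr{\Upsilon[\varrho_{\rm in}]\,\xi_j^k}\ge 0$ for every operator $\xi_j^k$ block-positive with respect to $\cP_j^k$. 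So the whole substance of the proposition is the identity
\[
\tr{\Upsilon[\varrho_{\rm in}]\,\xi_j^k}=d^S\,\tr{\Omega_{\Upsilon}^{ABC\ldots A'B'C'\ldots}\left((\xi_j^k)^{ABC\ldots}\otimes(\varrho_{\rm in}^{\rm T})^{A'B'C'\ldots}\right)},
\]
valid for an arbitrary linear map $\Upsilon$ and an arbitrary operator $\xi_j^k$ on $\cH^{ABC\ldots}$; since $d^S={\rm dim}\,\cH^S>0$, the sign of the left-hand side coincides with the sign of the trace on the right for every such $\xi_j^k$, which is precisely the asserted equivalence.

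To establish the identity I would substitute the inversion formula (\ref{map-through-choi}), i.e.\ $\Upsilon[\varrho_{\rm in}]=d^S\,{\rm tr}_{S'}[\Omega_{\Upsilon}^{SS'}(I^S\otimes\varrho_{\rm in}^{\rm T})]$ with $S=ABC\ldots$, $S'=A'B'C'\ldots$ and $I^S$ the identity on $\cH^S$ (legitimate since $\cH_{\rm in}^S=\cH_{\rm out}^S$), into $\tr{\Upsilon[\varrho_{\rm in}]\,\xi_j^k}$, and then use linearity of the trace together with the elementary partial-trace rule ${\rm tr}_S\left[({\rm tr}_{S'}M)\,\xi_j^k\right]={\rm tr}_{SS'}\left[M\,(\xi_j^k\otimes I^{S'})\right]$ applied to $M=\Omega_{\Upsilon}^{SS'}(I^S\otimes\varrho_{\rm in}^{\rm T})$, followed by the trivial rearrangement $(I^S\otimes\varrho_{\rm in}^{\rm T})(\xi_j^k\otimes I^{S'})=\xi_j^k\otimes\varrho_{\rm in}^{\rm T}$; this collapses the expression to the right-hand side above. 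It is a one-line computation. The only bookkeeping point is that the transposition in (\ref{map-through-choi}) must be taken in the same orthonormal basis that defines $\ket{\Psi_+^{SS'}}$, so that $\varrho_{\rm in}^{\rm T}$ genuinely acts on the cloned register $A'B'C'\ldots$; this is exactly the convention already fixed in Sec.~\ref{section-dynamics}.

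The one conceptual point worth making explicit sits in the first reduction: the separability criterion was phrased for density operators, while $\Upsilon$ is only assumed linear, so a priori $\Upsilon[\varrho_{\rm in}]$ need not be positive or of unit trace. This is dealt with by conic duality --- ``$\tr{H\,\xi_j^k}\ge 0$ for all $\cP_j^k$-block-positive $\xi_j^k$'' says exactly that $H$ lies in the dual cone, namely the cone of $\cP_j^k$-separable operators --- which is the natural unnormalized reading of the constraint defining $\cE[\cD_j^k(\varrho_{\rm in})]$; the unit-trace normalization is not lost, it is simply restored the moment one intersects with the CPT maps, as in $\cD_j^k(\varrho_{\rm in})={\rm CPT}\cap\cE[\cD_j^k(\varrho_{\rm in})]$. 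Beyond this remark I do not anticipate any real obstacle: the argument is elementary linear algebra on the Choi operator plus the block-positive/separable duality already invoked in the text.
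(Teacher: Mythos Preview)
Your proposal is correct and follows exactly the paper's own argument: invoke the block-positive/separable duality to rewrite the membership condition as $\tr{\Upsilon[\varrho_{\rm in}]\,\xi_j^k}\ge 0$ for all $\xi_j^k$, then substitute the inversion formula (\ref{map-through-choi}) to obtain the Choi-operator trace. The paper compresses this into two sentences, whereas you have spelled out the partial-trace manipulation and the conic-duality caveat for non-positive $\Upsilon[\varrho_{\rm in}]$; these are helpful clarifications but add nothing substantively new.
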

\begin{proof}
Separability of $\Upsilon[\varrho_{\rm in}]$ with respect to the
partition $\cP_j^k$ is equivalent to the inequality ${\rm tr}
\left[ \Upsilon[\varrho_{\rm in}] \xi_j^k \right] \ge 0$ for all
$\xi_j^k$. Substituting (\ref{map-through-choi}) for
$\Upsilon[\varrho_{\rm in}]$ concludes the proof.
\end{proof}

As a result, the cone $\cE[\cD_j^k (\varrho_{\rm in})]$ is dual to
the cone of maps $\Upsilon^\circ [X] = \xi_j^k \tr{\varrho_{\rm
in} X}$. As concerns the state-independent property $\cD_j^k$, the
map $\Upsilon$ belongs to the set $\cD_j^k$ if its Choi matrix
satisfies ${\rm tr} \left[ \Omega_{\Upsilon}^{ABC\ldots
A'B'C'\ldots} \left( (\xi_j^k)^{ABC\ldots} \otimes
\varrho^{A'B'C'\ldots} \right) \right] \ge 0$ for all $\xi_j^k$
and $\varrho^{A'B'C'\ldots}$.

The criterion provided by Proposition~\ref{proposition-1} is not
quite operational. To overcome this obstacle we derive sufficient
criteria of entanglement dissociation.

Consider a particular partition $\cP_j^k$. Suppose a linear map
$\Xi:\cT(\cH_{\rm in})\mapsto\cT(\cH_{\rm in})$ which transforms
the density operator $\varrho_{\rm in}$ into some Hermitian (but
not necessarily positive) operator $\Xi[\varrho_{\rm in}]$ such
that
\begin{eqnarray}
\label{requirement} && \!\!\!\! \Big\langle \psi_1^{[\cP_j^k]_1}
\otimes \cdots \otimes \psi_{m-1}^{[\cP_j^k]_{m-1}} \otimes I
\otimes \psi_{m+1}^{[\cP_j^k]_{m+1}} \otimes \cdots \otimes
\psi_k^{[\cP_j^k]_k} \Big| \nonumber\\
&& \times \,\, \Xi[\varrho_{\rm in}] \,\, \Big|
\psi_1^{[\cP_j^k]_1} \otimes \cdots \otimes
\psi_{m-1}^{[\cP_j^k]_{m-1}} \otimes I \otimes
\psi_{m+1}^{[\cP_j^k]_{m+1}} \nonumber\\
&& \quad\quad\quad\quad \otimes \cdots \otimes
\psi_k^{[\cP_j^k]_k} \Big\rangle \ge 0
\end{eqnarray}

\noindent is fulfilled for some vectors
$\psi_1,\ldots,\psi_{m-1},\psi_{m+1},\ldots,\psi_k$, i.e.
$\Xi[\varrho_{\rm in}]$ after projection onto these vectors
becomes a positive operator from the cone
$\cS(\cH^{[\cP_j^k]_m})$. If this is the case, then for rank-1
Kraus operators $A_n \propto \ket{\varphi_n} \bra{\psi_n}$ with
arbitrary $\ket{\varphi_n}$, the operator $(A_1 \otimes \cdots
\otimes A_{m-1} \otimes I \otimes A_{m+1} \otimes \cdots \otimes
A_k) \Xi[\varrho_{\rm in}] (A_1^{\dag} \otimes \cdots \otimes
A_{m-1}^{\dag} \otimes I \otimes A_{m+1}^{\dag} \otimes \cdots
\otimes A_k^{\dag})$ belongs to a cone of separable states
$\sigma_j^k$.

Thus, we obtain the following sufficient criterion of entanglement
dissociation.

\begin{proposition} \label{proposition-2}
Concatenation of a linear Hermitian map $\Xi$ and a
$(k-1)$-partite EB operation $\Big( \cO_{\rm EB}^{[\cP_j^k]_1}
\otimes \cdots \otimes {\rm Id}^{[\cP_j^k]_m} \otimes \cdots
\otimes \cO_{\rm EB}^{[\cP_j^k]_k} \Big)$ belongs to
$\cE[\cD_j^k(\varrho_{\rm in})]$ if $\Xi[\varrho_{\rm in}]$
becomes positive after projection on right-singular vectors of the
rank-1 Kraus operators of the EB operation.
\end{proposition}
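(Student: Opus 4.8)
The plan is to verify the two defining requirements of membership in $\cE[\cD_j^k(\varrho_{\rm in})]$: linearity of the concatenated map, and the property that its action on $\varrho_{\rm in}$ produces a state separable with respect to $\cP_j^k$. Linearity is immediate, since $\Xi$ is linear by hypothesis and an EB operation written in its diagonal (Holevo/measure-and-prepare) sum representation is linear as well; the concatenation of linear maps is linear. So the only substantive point is to show that the output operator lies in the cone of $\sigma_j^k$'s (and, when intersected with CPT, is a genuine state).

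First I would write the $(k-1)$-partite EB operation using rank-1 Kraus operators. By the discussion preceding Proposition~\ref{proposition-2} (and the characterization of EB maps via the Holevo form in Sec.~\ref{section-ea}), each factor $\cO_{\rm EB}^{[\cP_j^k]_\ell}$ admits a diagonal sum representation with Kraus operators $A_{n_\ell}\propto\ket{\varphi_{n_\ell}}\bra{\psi_{n_\ell}}$ for $\ell\neq m$, where the $\ket{\psi_{n_\ell}}$ are exactly the right-singular vectors referenced in the statement. Hence the full operation sends $X$ to
\begin{equation}
\sum_{\{n_\ell\}_{\ell\neq m}} \Big(\!\bigotimes_{\ell\neq m} A_{n_\ell}\!\Big)\,X\,\Big(\!\bigotimes_{\ell\neq m} A_{n_\ell}^{\dag}\!\Big),\nonumber
\end{equation}
with the identity acting on the $m$-th part. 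Applying this to $\Xi[\varrho_{\rm in}]$ and using $A_{n_\ell}\propto\ket{\varphi_{n_\ell}}\bra{\psi_{n_\ell}}$, each summand factorizes as $\big(\bigotimes_{\ell\neq m}\ket{\varphi_{n_\ell}}\!\bra{\varphi_{n_\ell}}\big)\otimes M_{\{n_\ell\}}^{[\cP_j^k]_m}$, where $M_{\{n_\ell\}}$ is precisely $\Xi[\varrho_{\rm in}]$ projected (on all parts but the $m$-th) onto the right-singular vectors $\ket{\psi_{n_\ell}}$, up to positive proportionality constants. By the hypothesis — which is exactly condition \eqref{requirement} specialized to these vectors — each $M_{\{n_\ell\}}$ is a positive semidefinite operator on $\cH^{[\cP_j^k]_m}$. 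Thus every summand is a (non-normalized) product state across $\cP_j^k$, and the sum is an unnormalized $\sigma_j^k$; normalizing (the trace is positive provided the output is nonzero, which holds when $\Xi$ is trace-preserving on $\varrho_{\rm in}$) gives a separable state with respect to $\cP_j^k$. This establishes $\Xi$ concatenated with the EB operation lies in $\cE[\cD_j^k(\varrho_{\rm in})]$.

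The step I expect to require the most care is the bookkeeping of the tensor-factor structure: one must make sure that the "measure" parts of the EB operations act on the correct parties $[\cP_j^k]_\ell$, $\ell\neq m$, that their right-singular vectors are exactly the $\psi_\ell$ appearing in \eqref{requirement}, and that the leftover operator on party $[\cP_j^k]_m$ is the one asserted to be positive — i.e., that "projection onto the right-singular vectors of the Kraus operators of the EB operation" in the statement matches the projection written in \eqref{requirement}. Once the indices are aligned, positivity of each residual block and the product form across the partition are automatic, and the result follows. A minor additional remark worth including: Hermiticity of $\Xi[\varrho_{\rm in}]$ guarantees the residual blocks $M_{\{n_\ell\}}$ are Hermitian, so the hypothesis \eqref{requirement} (an inequality on diagonal matrix elements in arbitrary product vectors on the remaining parts) indeed upgrades to full positive semidefiniteness of each block.
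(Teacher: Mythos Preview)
Your proposal is correct and follows essentially the same approach as the paper: expand each EB factor in rank-1 Kraus operators $A_{n_\ell}\propto\ket{\varphi_{n_\ell}}\bra{\psi_{n_\ell}}$, observe that sandwiching $\Xi[\varrho_{\rm in}]$ by these operators on all parts except the $m$-th yields $\big(\bigotimes_{\ell\neq m}\ket{\varphi_{n_\ell}}\bra{\varphi_{n_\ell}}\big)\otimes M_{\{n_\ell\}}$ with $M_{\{n_\ell\}}\ge 0$ by hypothesis \eqref{requirement}, and conclude that each term (hence the sum) lies in the cone of $\sigma_j^k$'s. The paper gives exactly this argument in the paragraph immediately preceding the proposition; your version is simply more explicit about linearity, normalization, and the role of Hermiticity.
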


The idea of Proposition \ref{proposition-2} is shown for a 6-body
system in Fig. \ref{figure3}. The benefit of the constructed
concatenation is that the map $\Xi$ does not have to be positive
\footnote{A linear map is called positive if it maps positive
operators into positive ones.} (in contrast to Ref.
\cite{filippov-ziman-2013}), which makes the set
$\cE[\cD_j^k(\varrho_{\rm in})]$ even larger.

When all possible states $\varrho_{\rm in}$ are considered, the
satisfaction of requirement (\ref{requirement}) becomes equivalent
to the positivity of the map $\Big( \cO_{\rm EB}^{[\cP_j^k]_1}
\otimes \cdots \otimes {\rm Id}^{[\cP_j^k]_m} \otimes \cdots
\otimes \cO_{\rm EB}^{[\cP_j^k]_k} \Big) \circ \Xi$. This map is
automatically positive if $\Xi$ transforms density operators into
block-positive operators $\xi_j^k$, which in turn is equivalent to
the fact that its Choi operator is block-positive of the form
$\Omega_{\Xi}^{ABC\ldots A'B'C'\ldots} =
\xi^{\cP_j^k(ABC\ldots)|A'B'C'\ldots}$.

\begin{corollary}
\label{corollary-1} If $\Omega_{\Xi}^{ABC\ldots A'B'C'\ldots}$ is
block-positive with respect to the partition
$\cP_j^k(ABC\ldots)|A'B'C'\ldots$, then $\Big( \cO_{\rm
EB}^{[\cP_j^k]_1} \otimes \cdots \otimes {\rm Id}^{[\cP_j^k]_m}
\otimes \cdots \otimes \cO_{\rm EB}^{[\cP_j^k]_k} \Big) \circ \Xi
\in \cD_j^k$ for arbitrary EB operations.
\end{corollary}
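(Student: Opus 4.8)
The plan is to recognize Corollary~\ref{corollary-1} as the state-independent version of Proposition~\ref{proposition-2}: it suffices to show that the block-positivity hypothesis on $\Omega_\Xi$ forces requirement (\ref{requirement}) to hold for \emph{every} input state $\varrho_{\rm in}$ and \emph{every} admissible choice of the projection vectors, after which Proposition~\ref{proposition-2} applies pointwise in $\varrho_{\rm in}$ and one intersects over all input states.

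First I would unfold the hypothesis. Block-positivity of $\Omega_\Xi^{ABC\ldots A'B'C'\ldots}$ with respect to $\cP_j^k(ABC\ldots)|A'B'C'\ldots$ means $\bra{\psi_1^{[\cP_j^k]_1}\otimes\cdots\otimes\psi_k^{[\cP_j^k]_k}\otimes\chi^{A'B'C'\ldots}}\Omega_\Xi\ket{\psi_1\otimes\cdots\otimes\psi_k\otimes\chi}\ge 0$ for arbitrary vectors $\psi_1,\ldots,\psi_k$ on the $k$ parties of $\cP_j^k$ and an arbitrary vector $\chi$ on the whole clone $A'B'C'\ldots$ (which counts as a single party, so $\chi$ need not be a product across $A',B',C',\ldots$ --- this is exactly what makes it admissible as an eigenvector of the generally entangled $\varrho_{\rm in}^{\rm T}$). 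Plugging an arbitrary density operator $\varrho_{\rm in}$ into the reconstruction formula (\ref{map-through-choi}), using that $\varrho_{\rm in}^{\rm T}$ is again positive semidefinite, and expanding $\varrho_{\rm in}^{\rm T}=\sum_i\mu_i\ket{e_i}\bra{e_i}$ with $\mu_i\ge 0$, a short partial-trace computation gives, for every product vector $\ket{\Psi}=\ket{\psi_1\otimes\cdots\otimes\psi_k}$,
\begin{equation}
\bra{\Psi}\Xi[\varrho_{\rm in}]\ket{\Psi}=d^S\sum_i\mu_i\,\bra{\Psi\otimes e_i}\Omega_\Xi\ket{\Psi\otimes e_i}\ge 0,
\end{equation}
since each term is a value of $\Omega_\Xi$ on a vector product across $[\cP_j^k]_1,\ldots,[\cP_j^k]_k$ and the clone. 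Hence $\Xi[\varrho_{\rm in}]$ is block-positive with respect to $\cP_j^k$, i.e.\ it is some $\xi_j^k$, for every $\varrho_{\rm in}$. This is the one step that needs care: keeping straight which subsystem each test vector lives on, and invoking positivity of the partial transpose of a state.

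Once $\Xi[\varrho_{\rm in}]$ is known to be block-positive, requirement (\ref{requirement}) is immediate: for any fixed vectors $\psi_1,\ldots,\psi_{m-1},\psi_{m+1},\ldots,\psi_k$ the operator obtained by sandwiching $\Xi[\varrho_{\rm in}]$ against $\bigotimes_{l\neq m}\ket{\psi_l}\otimes I^{[\cP_j^k]_m}$ has nonnegative expectation in every $\ket{\phi}\in\cH^{[\cP_j^k]_m}$ --- restore the $m$-th slot to $\ket{\phi}$ and apply block-positivity --- so it is a positive operator in $\cS(\cH^{[\cP_j^k]_m})$. In particular this holds when the $\psi_l$, $l\neq m$, are the right-singular vectors of the rank-1 Kraus operators of the arbitrary EB operations $\cO_{\rm EB}^{[\cP_j^k]_l}$. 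Thus Proposition~\ref{proposition-2} applies for each $\varrho_{\rm in}\in\cS(\cH^S)$, giving $\big(\cO_{\rm EB}^{[\cP_j^k]_1}\otimes\cdots\otimes{\rm Id}^{[\cP_j^k]_m}\otimes\cdots\otimes\cO_{\rm EB}^{[\cP_j^k]_k}\big)\circ\Xi\in\cE[\cD_j^k(\varrho_{\rm in})]$; intersecting over all $\varrho_{\rm in}$ (and with CPT maps when the concatenation is a physical channel), and noting the EB operations were arbitrary, yields the claim.

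The obstacle here is not a deep one: it lies entirely in the careful translation between three equivalent formulations --- block-positivity of the Choi operator with the clone lumped into one party, block-positivity of $\Xi[\varrho_{\rm in}]$ with respect to $\cP_j^k$, and positivity of the composed map $\big(\cO_{\rm EB}^{\otimes(k-1)}\otimes{\rm Id}\big)\circ\Xi$ --- and in remembering that the hypothesis must be checked for all input states and all test vectors simultaneously. A minor point worth flagging in the write-up is that, strictly speaking, the argument places the concatenation in the extended set $\bigcap_{\varrho_{\rm in}}\cE[\cD_j^k(\varrho_{\rm in})]$, and it belongs to $\cD_j^k$ proper precisely when it is additionally completely positive and trace preserving.
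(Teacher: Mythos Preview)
Your proof is correct and follows essentially the same route the paper sketches in the paragraph immediately preceding Corollary~\ref{corollary-1}: block-positivity of $\Omega_\Xi$ with respect to $\cP_j^k(ABC\ldots)|A'B'C'\ldots$ means $\Xi$ sends every density operator to a block-positive operator $\xi_j^k$, which in turn forces requirement~(\ref{requirement}) for all inputs and all test vectors, whence Proposition~\ref{proposition-2} applies universally. You have simply made explicit the Choi-to-state computation that the paper leaves implicit. One terminological slip: where you write ``positivity of the partial transpose of a state,'' you actually use that the \emph{full} transpose $\varrho_{\rm in}^{\rm T}$ is positive semidefinite (which is unconditionally true); calling it a partial transpose risks confusion with the PPT criterion, which of course fails in general.
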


\begin{table*}
\caption{\label{table-local} Local depolarizing $N$-qubit channel
$\Phi_q^{\rm local}$: ranges of parameter $q$, for which the
various entanglement-dissociative behaviors are detected (within
the interval $[-\frac{1}{3}, 1]$).}
\begin{ruledtabular}
\begin{tabular}{c|c|c|c|c|c|c|c|c|c}
 $N$ & $\varrho_{\rm in}$ & EA & $\frac{N}{2}\text{Sep} \bigcap 2\text{Ent}$  & $(\frac{N}{2}\!+\!1)\text{Sep} \bigcap \frac{N}{2}\text{Ent}$ & $
   2\text{Sep} \bigcap \frac{N}{2}\text{Ent}$ & $(N\!-\!1)\text{Ent}$=DGE & Not DGE & NPT$(1,N\!-\!1)$ & NPT$(\frac{N}{2},\frac{N}{2})$ \\[1mm] \hline
3 & $\ket{GHZ}$ & $\leqslant$0.490 & $-$ &  $-$ &  $-$ & $\leqslant$0.713 & $>$0.716\footnotemark[1]\footnotemark[2] & $>$0.557 & $-$   \\
  & $\ket{W}$ & $\leqslant$0.485 & $-$ &  $-$ &  $-$ & $\leqslant$0.686 & $>$0.772\footnotemark[1] & $>$0.576 &  $-$   \\
  & $\varrho_{\rm UPB}$ & $\leqslant$0.698 & $-$ & $-$  &  $-$ & $\leqslant$0.852 & $\varnothing$ & $\varnothing$ & $-$  \\
  & all & $\leqslant$0.477 & $-$  & $-$ & $-$ & $\leqslant$0.650 & $-$ &  $-$ & $-$
\\ \hline
4 & $\ket{GHZ}$ & $\leqslant$0.453 & $\leqslant$0.548 & $\leqslant$0.553  & $\leqslant$0.548  & $\leqslant$0.751 & $>$0.781\footnotemark[1]\footnotemark[2] & $>$0.578 & $>$0.512  \\
  & $\ket{W}$ & $\leqslant$0.447 & $\leqslant$0.473 & $\leqslant$0.581 &  $\leqslant$0.473 & $\leqslant$0.756 & $>$0.842\footnotemark[1] & $>$0.585 & $>$0.548  \\
  & $\ket{Cl}$ & $\leqslant$0.444 & $\leqslant$0.478 & $\leqslant$0.574 & $\leqslant$0.478  & $\leqslant$0.742 & $>$0.774\footnotemark[1] & $>$0.532 & $>$0.550  \\
  & all & $\leqslant$0.444 & $\leqslant$0.472 & $\leqslant$0.550 & $\leqslant$0.472 & $\leqslant$0.715 & $-$ & $-$ & $-$  \\
\hline 6 & $\ket{GHZ}$ & $\leqslant$0.414 & $\leqslant$0.433 &
$\leqslant$0.591  & $\leqslant$0.530 & $\leqslant$0.826 & $>$0.850\footnotemark[2] & $>$0.638 & $>$0.490  \\
\end{tabular}
\end{ruledtabular}
\footnotetext[1]{Computation via the method of Ref.
\cite{jungnitsch-2011}.} \footnotetext[2]{Computation via the
method of Ref. \cite{seevinck-2008}}
\end{table*}

The sets $\cE[k\text{Sep}(\varrho_{\rm in})]$ and
$\cE[r\text{Ent}(\varrho_{\rm in})]$ are nothing else but
appropriate convex hulls of sets $\cE[\cD_j^k(\varrho_{\rm in})]$
which can be detected by Proposition \ref{proposition-2}. Let us
remember, however, that we are interested in characterizing sets
$k\text{Sep}(\varrho_{\rm in})$ and $r\text{Ent}(\varrho_{\rm
in})$ of physical (CPT) maps. Since the map $\Phi$ under
investigation is originally CPT, its decomposition into
mathematical maps of the above propositions does not change this
fact but ensures that it belongs to a desired set of maps.
Therefore, we have the following statement.
\begin{proposition}
\label{proposition-3} Suppose a quantum channel $\Phi$ can be
decomposed into the sum $\Phi=\sum_{\cP_j^k \in {\sf P}} \cM_j^k$,
where each elementary map $\cM_j^k \in \cE[\cD_j^k(\varrho_{\rm
in})]$ is constructed via Proposition \ref{proposition-2}. If
${\sf P}$ is a subset of partitions contributing to $k$-separable
or $r$-entangled states, then $\Phi$ belongs to $k{\rm
Sep}(\varrho_{\rm in})$ or $r{\rm Ent}(\varrho_{\rm in})$,
respectively.
\end{proposition}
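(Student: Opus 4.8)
The plan is to combine linearity of $\Phi$, the structural output of Proposition~\ref{proposition-2}, and trace-preservation of $\Phi$. First I would evaluate the assumed decomposition on $\varrho_{\rm in}$, so that $\Phi[\varrho_{\rm in}]=\sum_{\cP_j^k\in{\sf P}}\cM_j^k[\varrho_{\rm in}]$. Since each $\cM_j^k$ is constructed as in Proposition~\ref{proposition-2}, it sends $\varrho_{\rm in}$ into the cone of states separable with respect to $\cP_j^k$, i.e. $\cM_j^k[\varrho_{\rm in}]=p_j^k\,\sigma_j^k$ for some $p_j^k\ge 0$ and some separable $\sigma_j^k$ (the terms with $p_j^k=0$ are simply discarded).

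Next I would fix the weights using the fact that $\Phi$ is CPT, hence trace-preserving: taking the trace of the identity above gives $1=\tr{\Phi[\varrho_{\rm in}]}=\sum_{\cP_j^k\in{\sf P}}p_j^k$, so $\{p_j^k\}_{\cP_j^k\in{\sf P}}$ is a probability distribution. Therefore $\Phi[\varrho_{\rm in}]=\sum_{\cP_j^k\in{\sf P}}p_j^k\,\sigma_j^k$ is a genuine convex mixture of states, each separable with respect to one of the partitions in ${\sf P}$.

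The two assertions then follow by unpacking the hypothesis on ${\sf P}$. If every $\cP_j^{k'}\in{\sf P}$ has $k'\ge k$, then each $\sigma_j^{k'}$ in the mixture is in particular $k$-separable (regroup parts, or invoke the chain $\cS_{N\text{-sep}}\subset\cdots\subset\cS_{1\text{-sep}}$ of Sec.~\ref{section-formalism}); by convexity of $\cS_{k\text{-sep}}$ the mixture is $k$-separable, so $K_{\rm sep}[\Phi[\varrho_{\rm in}]]\ge k$ and $\Phi\in k\text{Sep}(\varrho_{\rm in})$. Likewise, if every part of every $\cP_j^{k'}\in{\sf P}$ contains at most $r$ bodies, then for each $\sigma_j^{k'}$ its separable resolution over $\cP_j^{k'}$ is an admissible resolution in~(\ref{R-entanglement}) yielding $\max_m\{\#\text{ bodies within }[\cP_j^{k'}]_m\}\le r$, so $\sigma_j^{k'}\in\cS_{r\text{-ent}}$; by convexity of $\cS_{r\text{-ent}}$ the mixture lies in $\cS_{r\text{-ent}}$, so $R_{\rm ent}[\Phi[\varrho_{\rm in}]]\le r$ and $\Phi\in r\text{Ent}(\varrho_{\rm in})$.

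Once Proposition~\ref{proposition-2} is in hand the argument is essentially bookkeeping, and I expect the only delicate points to be the cone-versus-set distinction and the precise meaning of ``partitions contributing to $k$-separable / $r$-entangled states.'' The elementary blocks output \emph{unnormalized} separable operators, and it is precisely trace-preservation of the physical channel $\Phi$ that promotes the family $\{p_j^k\}$ to a bona fide probability distribution --- the statement would fail were $\Phi$ merely positive rather than CPT. The second point simply records that the relevant condition is $k'\ge k$ for all $\cP_j^{k'}\in{\sf P}$ in the $k$-separable case, and that every part of every partition in ${\sf P}$ contains at most $r$ bodies in the $r$-entangled case, which is how the inclusion diagram of Sec.~\ref{section-ea} is organized.
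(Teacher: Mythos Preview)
Your argument is correct and matches the paper's reasoning. The paper does not write out a formal proof of Proposition~\ref{proposition-3} but rather presents it as an immediate consequence of the preceding paragraph: the extended sets $\cE[k\text{Sep}(\varrho_{\rm in})]$ and $\cE[r\text{Ent}(\varrho_{\rm in})]$ are convex hulls of the $\cE[\cD_j^k(\varrho_{\rm in})]$, and since $\Phi$ is CPT the decomposition into mathematical maps ``ensures that it belongs to a desired set of maps.'' Your write-up simply unpacks this sentence --- in particular making explicit that trace-preservation of $\Phi$ is what turns the nonnegative weights $p_j^k$ into a probability distribution, and that convexity of $\cS_{k\text{-sep}}$ and $\cS_{r\text{-ent}}$ finishes the job --- so the approaches coincide.
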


Similarly, to detect maps from the state-independent sets $k{\rm
Sep}$ and $r{\rm Ent}$ one can use Corollary \ref{corollary-1}
instead of Proposition \ref{proposition-2} in the statement of
Proposition \ref{proposition-3}.

\section{\label{section-application} Applicability of criteria to depolarizing channels}

The sufficient criterion to detect $k{\rm Sep}(\varrho_{\rm in})$
and $r{\rm Ent}(\varrho_{\rm in})$ channels, Proposition
\ref{proposition-3}, implies the existence of the specific
decomposition of the channel of interest, $\Phi$. In this section,
we provide a recipe for construction of such a decomposition for
relatively simple one-parametric families of channels $\Phi$.
Although we do not raise the question of optimality, our findings
enable us to reveal features of the entanglement structure
dynamics.

A general depolarizing map $\Phi: \cT(\cH_d) \mapsto \cT(\cH_d)$
is given by the formula $\Phi = q {\rm Id} + (1-q) {\rm Tr}$,
where ${\rm Tr} [X] = \tr{X} \frac{1}{d} I_d$ is the tracing map.
The map $\Phi$ represents a valid channel (CPT map) if
$q\in[-(d^2-1)^{-1},1]$. Let us consider two one-parametric
families of channels acting on $N$ qubits: the local depolarizing
noise $\Phi_q^{\text{local}} \equiv \Phi_q^{\otimes N}$, where
$\Phi_q$ is a single-qubit map ($d=2$), and the global
depolarizing noise $\Phi_q^{\text{global}}$ ($d=2^N$). Our goal is
the following: for fixed $k$ and $r$, find the region of parameter
$q$ such that the channel $\Phi_q^{\text{local}}$ (or
$\Phi_q^{\text{global}}$) surely adopts the decomposition into
elementary blocks constituting $k\text{Sep} \cap r\text{Ent}$.

In what follows, we do not restrict the number of qubits $N$ but,
in view of the enormous number of possible partitions, we consider
the most interesting cases. All of them represent channels
dissociating genuine entanglement but correspond to various
structures of output states:
\begin{itemize}
\item[(a)] $k=N$ and $r=1$, the output state is fully separable
(EA channels);

\item[(b)] $k=\frac{N}{2}$ and $r=2$, the output state
entanglement mixture is composed of pairs of entangled particles;

\item[(c)] $k=\frac{N}{2}+1$ and $r=\frac{N}{2}$, the biggest
clusters in the output state entanglement mixture cannot contain
more than $\frac{N}{2}$ particles, with the remaining
$\frac{N}{2}$ particles being disentangled;

\item[(d)] $k=2$ and $r=\frac{N}{2}$, the output state
entanglement contains mixtures of two or more clusters of maximum
size $\frac{N}{2}$;

\item[(e)] $k=2$ and $r=N-1$, at least one particle is separated
from entanglement compounds in the output state entanglement
mixture (the biggest subset of DGE channels).
\end{itemize}

For $N=6$ the elementary blocks of these kinds of channels are illustrated
in Fig. \ref{figure3}.

Since the depolarizing channels under investigation are
permutationally invariant, we also consider all possible
permutations of elementary blocks. This is equivalent to
relabelling of particles and, therefore, leads to a simplification
of the analysis of permutationally invariant input states.

To anticipate the results, in Tables \ref{table-local} and
\ref{table-global} we present the ranges of parameter $q$ for
which the depolarizing channels $\Phi_q^{\text{local}}$ and
$\Phi_q^{\text{global}}$, respectively, fall into one of the
classes (a)--(e). Within these ranges, the existence of a
corresponding decomposition in the statement of Proposition
\ref{proposition-3} can be shown [we sum up technical details for
each class (a)--(e) in the forthcoming subsections of the same
label]. The column ``Not DGE'' in Tables \ref{table-local} and
\ref{table-global} is based on detection of geunine entanglement
according to Refs. \cite{jungnitsch-2011,seevinck-2008}. The last
two columns in Tables \ref{table-local} and \ref{table-global} are
based on the conventional negativity under partial transpose (NPT)
entanglement criterion for most asymmetric bipartition (1 body vs.
$N-1$ bodies) and symmetric bipartition ($\frac{N}{2}$ bodies vs.
$\frac{N}{2}$ bodies). In the following subsections A--E, we
present algebra leading to the parameters $q$ for the classes of
channels (a)--(e) above.

\begin{table*}
\caption{\label{table-global} Global depolarizing $N$-qubit
channel $\Phi_q^{\rm global}$: ranges of parameter $q$, for which
the various entanglement-dissociative behaviors are detected
(within the interval $[-(2^{2N}-1)^{-1}, 1]$).}
\begin{ruledtabular}
\begin{tabular}{c|c|c|c|c|c|c|c|c|c}
 $N$ & $\varrho_{\rm in}$ & EA & $\frac{N}{2}\text{Sep} \bigcap 2\text{Ent}$  & $(\frac{N}{2}\!+\!1)\text{Sep} \bigcap \frac{N}{2}\text{Ent}$ & $
   2\text{Sep} \bigcap \frac{N}{2}\text{Ent}$ & $(N\!-\!1)\text{Ent}$=DGE & Not DGE & NPT$(1,N\!-\!1)$ & NPT$(\frac{N}{2},\frac{N}{2})$ \\[1mm] \hline
3 & $\ket{GHZ}$ & $\leqslant$0.147 & $-$ & $-$ & $-$ & $\leqslant$0.402 & $>$0.429\footnotemark[3]\footnotemark[4] & $>$0.200 & $-$  \\
  & $\ket{W}$ & $\leqslant$0.125 & $-$ & $-$ & $-$ & $\leqslant$0.317 & $>$0.479\footnotemark[3] &  $>$0.210 & $-$ \\
  & $\varrho_{\rm UPB}$ & $\leqslant$0.400 & $-$ & $-$ & $-$ & $\leqslant$0.690 & $\varnothing$ & $\varnothing$ & $-$ \\
  & all & $\leqslant$0.111 & $-$ & $-$ & $-$ & $\leqslant$0.289 & $-$ & $-$ &  $-$ \\
\hline
4 & $\ket{GHZ}$ & $\leqslant$0.062 & $\leqslant$0.202 & $\leqslant$0.111 & $\leqslant$0.202  & $\leqslant$0.262 & $>$0.467\footnotemark[3]\footnotemark[4] & $>$0.112 & $>$0.112    \\
  & $\ket{W}$ & $\leqslant$0.048 & $\leqslant$0.123 & $\leqslant$0.124 & $\leqslant$0.123  & $\leqslant$0.256 & $>$0.474\footnotemark[3] &  $>$0.127 & $>$0.112    \\
  & $\ket{Cl}$ & $\leqslant$0.052 & $\leqslant$0.123 & $\leqslant$0.109 & $\leqslant$0.123  & $\leqslant$0.229 & $>$0.385\footnotemark[3] &  $>$0.112 & $>$0.112    \\
  & all & $\leqslant$0.047 & $\leqslant$0.121 & $\leqslant$0.107 & $\leqslant$0.121 & $\leqslant$0.184 & $-$ & $-$ & $-$    \\
\hline 6 & $\ket{GHZ}$ & $\leqslant$0.011 & $\leqslant$0.034 &
$\leqslant$0.032 & $\leqslant$0.046 & $\leqslant$0.131 &
$>$0.493\footnotemark[4]
& $>$0.031 & $>$0.031 \\
\end{tabular}
\end{ruledtabular}
\footnotetext[3]{Computation via the method of Ref.
\cite{jungnitsch-2011}.} \footnotetext[4]{Computation via the
method of Ref. \cite{seevinck-2008}}
\end{table*}

\subsection{\label{section-applicability-ea} Entanglement annihilating channels}

The elementary block of EA channel is obtained by applying
entanglement breaking operations on $N-1$ particles (see Fig.
\ref{figure3}a). The exact form of chosen EB operations reads
$\cO_{{\rm EB}\psi_i}[X] = \frac{1}{2} \ket{\psi_i}\bra{\psi_i} X
\ket{\psi_i}\bra{\psi_i}$, where $\{ \frac{1}{2}
\ket{\psi_i}\bra{\psi_i} \}_{i=1}^4$ form a symmetric
informationally complete positive operator-valued measure
(SIC-POVM) for qubits (see the explicit analytical form of the
vectors $\{\ket{\psi_i}\}_{i=1}^{4}$ in
~\cite{scott-grassl-2010}). This choice is justified by the fact
that $\sum_{i=1}^4 \cO_{{\rm EB}\psi_i} = \Phi_{q=1/3}$. (The same
result would be obtained by using projectors on mutually unbiased
bases \cite{wootters-fields-1989} instead of SIC-POVM elements,
however, this approach leads to worse results for some input
states $\varrho_{\rm in}$.) The suggested decomposition reads
\begin{equation}
\label{EA-resolution} \Phi = \frac{1}{N} \sum_{m=1}^N \Big(
\Phi_{q=1/3}^{[\cP^N]_1} \otimes \cdots \otimes {\rm
Id}^{[\cP^N]_m}  \otimes \cdots \otimes \Phi_{q=1/3}^{[\cP^N]_N}
\Big) \circ \Xi_{a}(m),
\end{equation}

\noindent where $m$ is the index of a particle not subjected to EB
operations. We have taken into account that each $\Phi_{q=1/3}$ is
composed of EB operations $\cO_{{\rm EB}\psi_i}$ and therefore it
is convenient to parameterize the map $\Xi_{a}$ in such a way that
the vectors $\ket{\psi_i}$ are not included in the parametrization
directly. However, the linear map $\Xi_{a}(m)$ should satisfy the
requirement (\ref{requirement}) for all choices of vectors
$\ket{\psi_{i_t}}_{t=1}^{N-1}$ from the set
$\{\ket{\psi_i}\}_{i=1}^{4}$. To parameterize the map $\Xi_a(m)$
we resort to a so-called diagonal map of the form
\begin{eqnarray}
\label{diagonal} \Xi[X] &=& \frac{1}{2^N}
\sum_{i_1,\ldots,i_N=0,\ldots,3} x_{i_1 \cdots i_N}
\tr{(\varsigma_{i_1}\otimes\cdots\otimes\varsigma_{i_N})X} \nonumber\\
&& \qquad \qquad \qquad \qquad \qquad \times
\varsigma_{i_1}\otimes\cdots\otimes\varsigma_{i_N},
\end{eqnarray}

\noindent where $\varsigma_0=I_2$ and $\varsigma_1$,
$\varsigma_2$, $\varsigma_3$ are conventional Pauli matrices. Let
$\#_0[i_1 \cdots i_N]$ denote the number of zeros in the sequence
$i_1,\ldots,i_N$. Consider diagonal maps $\Xi_{a}(m)$ such that
the coefficients $\{x_{i_1 \cdots i_N}\}$ depend on $\#_0[i_m]$
and $\#_0[i_1 \cdots i_{m-1} i_{m+1} \cdots i_N]$ only, i.e.
$x_{i_1 \cdots i_N} = f_{a}(\#_0[i_m],\#_0[i_1 \cdots i_{m-1}
i_{m+1} \cdots i_N])$, with restrictions on the parameters
$\{f_a\}$ being imposed by (\ref{requirement}). Then, the relation
(\ref{EA-resolution}) becomes valid if
\begin{eqnarray}
\label{EA-system} && \frac{n}{3^{n-1}N} f_{a}(0,N-n) +
\frac{N-n}{3^{n} N}
f_{a}(1,N-n-1) \nonumber\\
&& = \left\{
\begin{array}{ll}
  q^n, & \Phi=\Phi_q^{\text{local}}, \\
  q^{1-\delta_{n,0}}, & \Phi=\Phi_q^{\text{global}}, \\
\end{array}\right. \qquad n=0,\ldots,N,
\end{eqnarray}

\noindent where $\delta_{s,t}$ is the conventional Kronecker
delta.

For a fixed input state $\varrho_{\rm in}$, we find the
restrictions on the parameters $\{f_a\}$ given by
(\ref{requirement}) and then solve the system of equations
(\ref{EA-system}) numerically. If the system has a solution for
some $\tilde{q}$, then it also has a solution for $q<\tilde{q}$.
Solutions ($\max \tilde{q}$) are presented for some interesting
states $\varrho_{\rm in}$ \footnote{The states of interest are
$\ket{GHZ} = \frac{1}{\sqrt{2}}(\ket{0}^{\otimes
N}+\ket{1}^{\otimes N})$; $\ket{W} =
\frac{1}{\sqrt{N}}(\ket{10\ldots 0}+\ket{01\ldots
0}+\cdots+\ket{00\ldots 1})$; $\varrho_{\rm UPB} = \frac{1}{4}(I_8
- P_{\rm UPB})$, where $P_{\rm UPB}$ is a projector on
unextendible product bases for 3 qubits
\cite{bennet-1999,divincenzo-2003}; $\ket{Cl} =
\frac{1}{2}(\ket{0000}+\ket{0011}+\ket{1100}-\ket{1111})$.} of
$N=3,4,6$ qubits in Tables \ref{table-local} and
\ref{table-global} for local and global noises, respectively. We
also consider the case of all possible input states as follows:
since $\Xi_a$ linearly depends on the parameters $\{f_a\}$, we
check the corresponding block-positivity of $\Omega_{\Xi_a}$ (see
Corollary \ref{corollary-1}) for some number of parameters
$\{f_a\}$ and construct a convex hull of satisfactory parameters;
then we solve the system of equations (\ref{EA-system}) for
$\{f_a\}$ from the convex hull; the maximum value $q$ for which
the system has a solution is presented in Tables \ref{table-local}
and \ref{table-global} in the rows ``all''.

\subsection{$\frac{N}{2}\text{Sep} \bigcap 2\text{Ent}$ channels}

The output state will be $\frac{N}{2}$-separable and $2$-entangled
if the channel can be decomposed into elementary transformations
$\cE[\cD_j^{N/2}(\varrho_{\rm in})]$ from
Proposition~\ref{proposition-2}, each containing $(\frac{N}{2} - 1
)$ EB operations $\cO_{\text{EB}}$ on two qubits (see Fig.
\ref{figure3}b). As in the previous subsection, we choose EB
operations of the form
$\cO_{\text{EB}\psi_i}[X]=\frac{1}{4}\ket{\psi_i}\bra{\psi_i} X
\ket{\psi_i}\bra{\psi_i}$, where $\{ \frac{1}{4}
\ket{\psi_i}\bra{\psi_i} \}_{i=1}^{16}$ form a SIC-POVM in
$\cT(\cH_4)$ (see the explicit analytical form of the vectors
$\{\ket{\psi_i}\}_{i=1}^{16}$ in ~\cite{scott-grassl-2010}). Then
$\sum_{i=1}^{16} \cO_{{\rm EB}\psi_i}^{AB} = \Phi_{q=1/5}^{AB}$ is
a depolarizing map acting on two qubits ($A$ and $B$)
simultaneously. The decomposition of channel $\Phi$ reads
\begin{eqnarray}
\label{Nover2-separable-resolution} \Phi &=& {\binom{N}{2}}^{-1}
\sum_{\cP_j^{N/2} \in {\sf P}} \, \sum_{m=1}^{N/2} \Big(
\Phi_{q=1/5}^{[\cP_j^{N/2}]_1} \otimes \cdots
\nonumber\\
&& \otimes {\rm Id}^{[\cP_j^{N/2}]_m} \otimes \cdots \otimes
\Phi_{q=1/5}^{[\cP_j^{N/2}]_{N/2}} \Big) \circ \Xi_{b}(j,m),
\end{eqnarray}

\noindent where ${\sf P}$ is a set of $\frac{N!}{2^{N/2}(N/2)!}$
partitions $\cP_j^{N/2}$ such that $\#[\cP_j^{N/2}]_1 = \ldots =
\#[\cP_j^{N/2}]_{N/2} = 2$ (two qubits in each party), and the map
$\Xi_{b}(j,m)$ must meet the condition (\ref{requirement}) for all
choices of vectors $\ket{\psi_{i_t}}_{t=1}^{(N/2)-1}$ from the set
$\{\ket{\psi_i}\}_{i=1}^{16}$. Using diagonal maps $\Xi_{b}(j,m)$
of the form (\ref{diagonal}) with the parametrization $x_{i_1
\cdots i_N} = f_{b}(\#_0[i_l i_{l'}],\#_0[i_1 \cdots i_{l-1}
i_{l+1} \cdots i_{l'-1} i_{l'+1} \cdots i_N])$, $(i_{l}
i_{l'})\in[\cP_j^{N/2}]_m$, we obtain the following system of
equations:
\begin{eqnarray}
&& {\binom{N}{2}}^{-1} \bigg\{ \binom{n}{2}
\frac{f_{b}(0,N-n)}{5^{\lceil n/2 \rceil
- 1}} + n(N-n)\frac{f_{b}(1,N-n-1)}{5^{\lfloor n/2 \rfloor}} \nonumber\\
&& \qquad\qquad + \binom{N-n}{2} \frac{f_{b}(2,N-n-2)}{5^{\lceil
n/2 \rceil}} \bigg\} \nonumber\\
&& = \left\{
\begin{array}{ll}
  q^n, & \Phi=\Phi_q^{\text{local}}, \\
  q^{1-\delta_{n,0}}, & \Phi=\Phi_q^{\text{global}}, \\
\end{array}\right. \qquad n=0,\ldots,N.
\end{eqnarray}

The maximal values of $q$, for which the system has a solution
compatible with (\ref{requirement}), are presented for various
input states in Tables \ref{table-local} and \ref{table-global}.

\subsection{$(\frac{N}{2}+1)\text{Sep} \bigcap \frac{N}{2}\text{Ent}$ channels}

The output state will be $(\frac{N}{2}+1)$-separable and
$\frac{N}{2}$-entangled if the channel can be decomposed into
elementary transformations $\cE[\cD_j^{N/2+1}(\varrho_{\rm in})]$,
$j=1, \ldots, \binom{N}{N/2}$ (for such $j$s, the $N$-body system
is divided into $\frac{N}{2}$ single-body parts plus one part
comprising $\frac{N}{2}$ bodies, see Fig. \ref{figure3}c). To find
the decomposition for Proposition \ref{proposition-3}, we use the
single-qubit EB operations $\cO_{\text{EB}\psi_i}$,
$\ket{\psi_i}\in\cH_2$, $i=1,\ldots,4$ as in Sec.
\ref{section-applicability-ea}. This yields the following
decomposition:
\begin{eqnarray}
\label{Nover2-separable-and-entangled-resolution} \Phi &=&
{\binom{N}{N/2}}^{-1} \sum_{j=1}^{\binom{N}{N/2}} \Big(
\Phi_{q=1/3}^{[\cP_j^{N/2+1}]_1} \otimes \cdots \otimes
\Phi_{q=1/3}^{[\cP_j^{N/2+1}]_{N/2}} \nonumber\\
&& \qquad\qquad\qquad\quad \otimes {\rm
Id}^{[\cP_j^{N/2+1}]_{N/2+1}} \Big) \circ \Xi_{c}(j),
\end{eqnarray}

\noindent where the map $\Xi_{c}(j)$ must satisfy the requirement
(\ref{requirement}) for all choices of vectors
$\ket{\psi_{i_t}}_{t=1}^{N/2}$ from the set
$\{\ket{\psi_i}\}_{i=1}^{4}$ (vectors corresponding to SIC-POVM
for a qubit). Using diagonal maps $\Xi_{c}(j)$ of the form
(\ref{diagonal}) with the parametrization $x_{i_1 \cdots i_N} =
f_{c}(\#_0[\{i_1 \cdots i_N\} \setminus \{i_{l_1} \cdots
i_{l_{N/2}}\}],\#_0[i_{l_1} \cdots i_{l_{N/2}}])$, $(i_{l_1}
\cdots i_{l_{N/2}})\in[\cP_j^{N/2+1}]_{N/2+1}$, we obtain the
following system of equations:
\begin{eqnarray}
&& {\binom{N}{N/2}}^{-1} \sum_{l=0}^{N/2} \binom{n}{N/2-l}
\binom{N-n}{l} \frac{f_{c}(l,N-n-l)}{3^{N/2-l}}  \nonumber\\
&& = \left\{
\begin{array}{ll}
  q^n, & \Phi=\Phi_q^{\text{local}}, \\
  q^{1-\delta_{n,0}}, & \Phi=\Phi_q^{\text{global}}, \\
\end{array}\right. \qquad n=0,\ldots,N.
\end{eqnarray}

The maximal values of $q$, for which the system has a solution
compatible with (\ref{requirement}), are presented for various
input states in Tables \ref{table-local} and \ref{table-global}.

\subsection{$2\text{Sep} \bigcap \frac{N}{2}\text{Ent}$ channels}

The output state will be $2$-separable and $\frac{N}{2}$-entangled
if the channel can be decomposed into elementary transformations
$\cE[\cD_j^2 (\varrho_{\rm in})]$,
$j=\stirling{N}{2}-\frac{1}{2}\binom{N}{N/2}+1,\ldots,
\stirling{N}{2}$ (such choice of $j$s corresponds to bipartitions
of an $N$-body system into equal $\frac{N}{2}$-body parts, see
Fig. \ref{figure3}d). We use the following EB operations
$\cO_{\text{EB}}$ on $\frac{N}{2}$ qubits:
$\cO_{\text{EB}\psi_i}[X]=\frac{1}{2^{N/2}}\ket{\psi_i}\bra{\psi_i}
X \ket{\psi_i}\bra{\psi_i}$, where $\{\ket{\psi_i}\}_{i=1}^{2^N}$
is a set of normalized vectors such that
$\{\ket{\psi_i}\bra{\psi_i}\}_{i=1}^{2^N}$ is a set of SIC
projectors (see the explicit form of vectors
$\{\ket{\psi_i}\}_{i=1}^{2^N}$ up to $N=12$ in
\cite{scott-grassl-2010}). [Let us recall that a particular form
of vectors $\ket{\psi_i}$ is important only for a particular input
state $\varrho_{\rm in}$. If the input state is arbitrary, i.e.
the domain is $\cS(\cH_2^{\otimes N})$, then one should not care
about the specific form of EB operations.] The important fact is
that $\sum_{i=1, \ldots, 2^N} \cO_{\text{EB}\psi_i} =
\Phi_{q=(2^{N/2}+1)^{-1}}$ is a depolarizing map acting on
$\frac{N}{2}$ qubits. The possible decomposition reads
\begin{eqnarray}
\label{Nover2-entangled-resolution} \Phi &=& {\binom{N}{N/2}}^{-1}
\sum_{j=\stirling{N}{2}-\frac{1}{2}\binom{N}{N/2}+1}^{\stirling{N}{2}}
\, \sum_{m=1}^2
\nonumber\\
&& \Big( \Phi_{q=(2^{N/2}+1)^{-1}}^{[\cP_j^{2}]_m} \otimes {\rm
Id}^{[\cP_j^{2}]_{\{1,2\} \setminus m}} \Big) \circ \Xi_{d}(j,m),
\end{eqnarray}

\noindent where $\Xi_{d}(j,m)$ must satisfy condition
(\ref{requirement}) for all vectors $\{\ket{\psi_i}\}_{i=1}^{2^N}$
and for the corresponding domain of density operators
$\varrho_{\rm in}$. For computational reasons let us note that
checking the validity of (\ref{requirement}) is less
time-consuming when we justify the positivity of the Hermitian
operator without revealing its eigenvalues. Namely, the
eigenvalues of a $d \times d$ Hermitian matrix $X$ are
non-negative if and only if $C_k \ge 0$ for $k=1,\ldots,d$, where
$C_k$ is given by the recurrence relation $C_{k} = \frac{1}{k}
\sum_{l=1}^{k} (-1)^{l-1} C_{k-l} \tr{X^l}$ with initial condition
$C_0=1$ \cite{bengtsson-zyczkowski-2006}. We use this technique
for $N=6$.

Using diagonal maps $\Xi_{d}(j,m)$ of the form (\ref{diagonal})
with the parametrization $x_{i_1 \cdots i_N} = f_{d}(\#_0[i_{l_1}
\cdots i_{l_{N/2}}],\#_0[\{i_1 \cdots i_N\} \setminus \{i_{l_1}
\cdots i_{l_{N/2}}\}])$, $(i_{l_1} \cdots
i_{l_{N/2}})\in[\cP_j^2]_m$, we obtain the following system of
equations:
\begin{eqnarray}
&& {\binom{N}{N/2}}^{-1} \sum_{l=0}^{N/2} \binom{n}{N/2-l}
\binom{N-n}{l} \frac{f_{d}(l,N-n-l)}{(2^{N/2}+1)^{1-\delta_{l,N/2}}}  \nonumber\\
&& = \left\{
\begin{array}{ll}
  q^n, & \Phi=\Phi_q^{\text{local}}, \\
  q^{1-\delta_{n,0}}, & \Phi=\Phi_q^{\text{global}}, \\
\end{array}\right. n=0,\ldots,N.
\end{eqnarray}

The maximal values of $q$, for which the system has a solution,
are presented for various $\varrho_{\rm in}$ in Tables
\ref{table-local} and \ref{table-global}.

\begin{figure}[b]
\includegraphics[width=8.5cm]{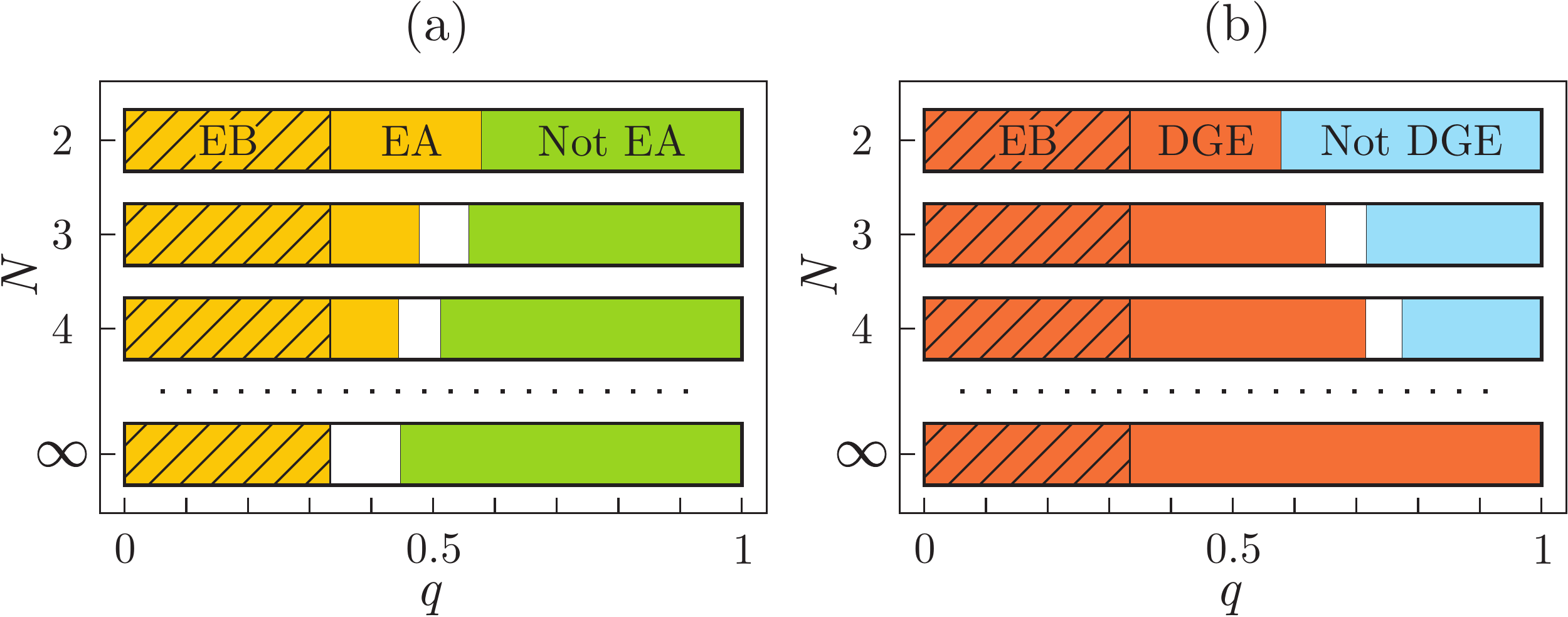}
\caption{\label{figure4} Scaling of the entanglement degradation
properties of an $N$-qubit local depolarizing channel
$\Phi_q^{\otimes N}$ with increasing $N$: (a) entanglement
annihilation, (b) dissociation of genuine entanglement.}
\end{figure}

\begin{figure*}
\includegraphics[width=18cm]{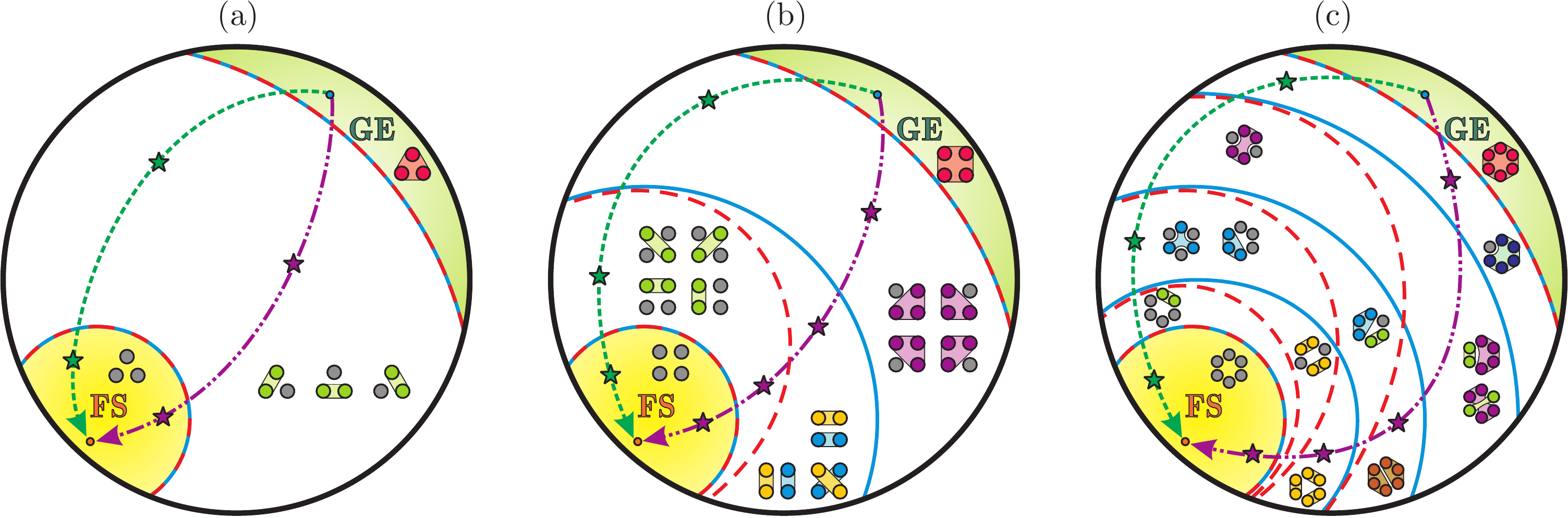}
\caption{\label{figure5} Tracks of the typical entanglement
structure dynamics subject to local depolarizing noise (green
dotted line) and global depolarizing noise (purple dash-dotted
line) for $N$-qubit systems: (a) $N=3$, (b) $N=4$, (c) $N=6$. The
state space $\cS(\cH_2^{\otimes N})$ is divided into areas of
$k$-separable states (red dashed lines) and $r$-entangled states
(blue solid lines), with representatives of the states being
depicted. Stars on the tracks denote points detected in Sec.
\ref{section-application} and listed in Tables \ref{table-local}
and \ref{table-global}.}
\end{figure*}

\subsection{\label{section-applicability-age} Channels dissociating genuine entanglement}

The elementary blocks of these channels can be obtained by
applying an EB operation on a single qubit (Fig. \ref{figure3}e).
We use the same EB operations as in subsection
\ref{section-applicability-ea}. This yields the decomposition
\begin{equation}
\label{AGE-resolution} \Phi = \frac{1}{N} \sum_{m=1}^N \Big(
\Phi_{q=1/3}^{[\cP_{j=m}^2]_1} \otimes {\rm Id}^{[\cP_{j=m}^2]_2}
\Big) \circ \Xi_{e}(m),
\end{equation}

\noindent where the map $\Xi_{e}(m)$ must satisfy
(\ref{requirement}) for all vectors $\{\ket{\psi_i}\}_{i=1}^{4}$
corresponding to SIC-POVM for a qubit. Using diagonal maps
$\Xi_{e}(m)$ of the form (\ref{diagonal}) with the parametrization
$x_{i_1 \cdots i_N} = f_{e}(\#_0[i_m],\#_0[i_1 \cdots i_{m-1}
i_{m+1} \cdots i_N])$, we find that (\ref{AGE-resolution}) becomes
a valid equality if
\begin{eqnarray}
\label{age-system} && \frac{n}{3N} f_{e}(0,N-n) + \frac{N-n}{N}
f_{e}(1,N-n-1)
\nonumber\\
&& = \left\{
\begin{array}{ll}
  q^n, & \Phi=\Phi_q^{\text{local}}, \\
  q^{1-\delta_{n,0}}, & \Phi=\Phi_q^{\text{global}} \\
\end{array}\right. \qquad n=0,1,\ldots,N. \qquad
\end{eqnarray}

\noindent The maximal values $q$, for which the system has a
solution and (\ref{requirement}) is fulfilled, are presented for
various input states in Tables \ref{table-local} and
\ref{table-global}.

\section{\label{section-discussion}Discussion}

To begin with, the NPT criterion gives a little information about
the multipartite entanglement structure. Indeed, one can observe
in Tables \ref{table-local} and \ref{table-global} many situations
when $\Phi[\varrho_{\rm in}]$ is either negative under partial
transpose but not genuinely entangled, or positive under partial
transpose but not fully separable. The gap between states
$\Phi[\varrho_{\rm in}]$ that are surely not genuinely entangled
and those that are definitely genuinely entangled is quite narrow
for particular input states. This can be treated as an indication
of the efficiency of the rather simple decomposition
(\ref{AGE-resolution}) involving single-qubit entanglement
breaking operations.

The remarkable fact is that our method enables us to consider
\textit{all} input states and find channels that transform any of
them to a particular entanglement structure. This is what we mean
by a ``typical'' behavior. For example, we can detect channels
that annihilate entanglement, for which the output state is always
fully separable whatever the input state is. Note, that the bounds
obtained on $q$ for entanglement annihilation are higher than
those that can be found via the condition of sufficiently small
purity $\tr{(\Phi[\varrho_{\rm in}])^2}$ \cite{hildebrand-2007}.
Scaling of EA for the local depolarizing channel is shown in Fig.
\ref{figure4}a. When $N \rightarrow \infty$, the channel
$\Phi_q^{\otimes N}$ cannot be EA if $q>\frac{1}{\sqrt{5}}$
\cite{simon-2002}, however, the question if $\text{EA}=\text{EB}$
still remains an open problem. On the contrary, from formula
(\ref{age-system}) one can see that the genuine entanglement of
any input state can be dissociated by a negligible noise in the
limit $N \rightarrow \infty$ (Fig. \ref{figure4}b).

The dissipative dynamics under consideration can be described by
the gradually decreasing parameter $q \sim e^{-\Gamma t}$, where
the dissipation rate $\Gamma$ takes, in principle, different
values for local and global noises. For our purposes it is enough
to know that $q$ continuously diminishes from $q=1$ to $q=0$. For
such types of dissipative dynamics, the state evolution through
the nested sets of Sec. \ref{section-formalism} is irreducible:
once the state comes into a particular ``doll'' of the structure,
it cannot escape it in the future.

Using the data from Table \ref{table-local}, we may conclude that
the dissociation of genuine multiparticle entanglement under local
depolarizing noise starts by detaching a single random particle
(i.e. the state becomes $(N-1)$-entangled). Then the noise
detaches particles one by one resulting in $k$-separable
$(N-k+1)$-entangled states ($k$ increases with decreasing $q$).
Indeed, since the noise is local, once a particle is detached from
the entanglement compound, there is no way for it to rejoin (Fig.
\ref{figure5}). Finally, the noisy evolution makes the state fully
separable.

The analysis of Table \ref{table-global} shows that the
entanglement dissociation progresses in a different way under
global depolarizing noise: while the beginning stage also implies
detaching of a single random particle from the entanglement
compound, in further dynamics this particle can fuse with another
one and form a two-particle entanglement cluster that is detached
from the main compound (a convex combination of such states). The
process continues until the point when the original compound is
divided into two clusters ($2$-separable $\frac{N}{2}$-entangled
state), then the detachment of particles and their successive
fusion result in the formation of more entanglement clusters of
smaller size ($k$-separable $\frac{N}{k}$-entangled state, $k$
increases with decreasing $q$), and so on until the full
separability (see Fig. \ref{figure5} for the case of 6 qubits).

\section{\label{section-summary} Summary}

Our study was motivated by the necessity to know the multiparticle
entanglement structure and its vulnerability to noises in physical
and quantum-informational applications. We did not restrict
ourselves to specific input states and considered the set of all
possible states as well. We found criteria for maps dissociating
entanglement with respect to a particular partition and developed
sufficient conditions for their reliable detection. Namely, the
channel of interest should adopt a decomposition into (not
necessarily completely positive) linear maps which give rise to
the desired form of the output. One can draw a rough analogy
between this decomposition and the path integral formulation of
quantum mechanics, where the trajectories can be quite
non-physical but this does not affect the resulting physical
evolution. For local and global depolarizing $N$-qubit channels we
provided a simple strategy of constructing decompositions that
allowed us to find noise levels guaranteeing the particular form
of entanglement structure. Our decompositions are not optimal and
can in principle be improved by applying modifications of
semidefinite programming \cite{jungnitsch-2011} and other
algorithms \cite{kampermann-2012} for Choi operators.
Nevertheless, our toolbox allowed us to reveal differences in
entanglement structure dynamics under local and global noises: the
particles split one by one from the entanglement compound in the
case of local noise, and tend to form clusters in the case of
global noise. We believe that the obtained results may be extended
to other noise models and provide additional information about the
general rules of the dynamics of multiparticle entanglement
structure.

\textit{Acknowledgments}. This work was supported by EU integrated
project SIQS, COST Action MP1006, APVV-0646-10 (COQI) and VEGA
2/0127/11 (TEQUDE). S.N.F. acknowledges support from the National
Scholarship Programme of the Slovak Republic and the 7th FP
project iQIT. S.N.F. and A.A.M. acknowledge support from the
Dynasty Foundation and the Russian Foundation for Basic Research
under Project No. 12-02-31524-mol-a. A.A.M. acknowledges partial
support from the Austrian Science Fund (FWF) through the SFB
FoQuS: F 4012. M.Z. acknowledges support from 7th FP STREP project
RAQUEL and GACR project P202/12/1142.


\begin{thebibliography}{99}

\bibitem{ziman-buzek-2011}
M. Ziman and V. Bu\v{z}ek, {\it Open system dynamics of simple
collision models}, pp. 199--227 in {\it Quantum Dynamics and
Information} ed. by R. Olkiewicz et al. (World Scientific,
Singapore, 2011).

\bibitem{li-haldane-2008}
H. Li and F. D. M. Haldane, Phys. Rev. Lett. {\bf 101}, 010504
(2008).

\bibitem{pizorn-2013}
I. Pi\v{z}orn, F. Verstraete, and R. M. Konik, arXiv:1309.2255
[cond-mat.str-el].

\bibitem{osborne-2002}
T. J. Osborne and M. A. Nielsen, Phys. Rev. A {\bf 66}, 032110
(2002).

\bibitem{osterloh-2002}
A. Osterloh, L. Amico, G. Falci, and R. Fazio, Nature 416, 608
(2002).

\bibitem{hofmann-2013}
M. Hofmann, A. Osterloh, and O. G\"{u}hne, arXiv:1309.2217
[quant-ph].

\bibitem{levi-mintert-2013}
F. Levi and F. Mintert, Phys. Rev. Lett. {\bf 110}, 150402 (2013).

\bibitem{hillery-1999}
M. Hillery, V. Bu\v{z}ek, and A. Berthiaume, Phys. Rev. A {\bf
59}, 1829 (1999).

\bibitem{hillery-2006}
M. Hillery, M. Ziman, V. Bu\v{z}ek, and M. Bielikov\'{a}, Phys.
Lett. A {\bf 349} 75 (2006).

\bibitem{karlsson-1998}
A. Karlsson and M. Bourennane, Phys. Rev. A {\bf 58}, 4394 (1998).

\bibitem{palma-1989}
G. M. Palma and P. L. Knight, Phys. Rev. A {\bf 39}, 1962 (1989).

\bibitem{marr-2003}
C. Marr, A. Beige, and G. Rempe, Phys. Rev. A {\bf 68}, 033817
(2003).

\bibitem{lucas-2013}
F. Lucas, F. Mintert, and A. Buchleitner, Phys. Rev. A {\bf 88},
032306 (2013).

\bibitem{pastawski-2009}
F. Pastawski, A. Kay, N. Schuch, and I. Cirac, Phys. Rev. Lett.
{\bf 103}, 080501 (2009).

\bibitem{simon-2010}
C. Simon et al., Eur. Phys. J. D {\bf 58}, 1 (2010).

\bibitem{giovannetti-2011}
V. Giovannetti, S. Lloyd, and L. Maccone, Nature Photonics {\bf
5}, 222 (2011).

\bibitem{vidal-2002}
G. Vidal and R. F. Werner, Phys. Rev. A {\bf 65}, 032314 (2002).

\bibitem{peres-1996} A. Peres, Phys. Rev. Lett. {\bf 77}, 1413
(1996).

\bibitem{simon-2002}
C. Simon and J. Kempe, Phys. Rev. A {\bf 65}, 052327 (2002).

\bibitem{dur-2004}
W. D\"{u}r and H.-J. Briegel, Phys. Rev. Lett. {\bf 92}, 180403
(2004).

\bibitem{bandyopadhyay-2005}
S. Bandyopadhyay and D. A. Lidar, Phys. Rev. A {\bf 72}, 042339
(2005).

\bibitem{hein-2005}
M. Hein, W. D\"{u}r, and H.-J. Briegel, Phys. Rev. A {\bf 71},
032350 (2005).

\bibitem{man-2008}
Z.-X. Man, Y.-J. Xia, and N. B. An, Phys. Rev. A {\bf 78}, 064301
(2008).

\bibitem{aolita-2008}
L. Aolita, R. Chaves, D. Cavalcanti, A. Ac\'{i}n, and L.
Davidovich, Phys. Rev. Lett. {\bf 100}, 080501 (2008).

\bibitem{aolita-2010}
L. Aolita, D. Cavalcanti, R. Chaves, C. Dhara, L. Davidovich, and
A. Ac\'{i}n, Phys. Rev. A {\bf 82}, 032317 (2010).

\bibitem{aolita-2009}
L. Aolita, D. Cavalcanti, A. Ac\'{i}n, A. Salles, M. Tiersch, A.
Buchleitner, and F. de Melo, Phys. Rev. A {\bf 79}, 032322 (2009).

\bibitem{liu-2009}
Z. Liu and H. Fan, Phys. Rev. A {\bf 79}, 064305 (2009).

\bibitem{frowis-2011}
F. Fr\"{o}wis and W. D\"{u}r, Phys. Rev. Lett. {\bf 106} 110402
(2011).

\bibitem{horodecki-1998}
M. Horodecki, P. Horodecki, and R. Horodecki, Phys. Rev. Lett.
{\bf 80}, 5239 (1998).

\bibitem{bennet-1999}
C. H. Bennett, D. P. DiVincenzo, T. Mor, P. W. Shor, J. A. Smolin,
and B. M. Terhal, Phys. Rev. Lett. {\bf 82}, 5385 (1999).

\bibitem{carvalho-2004}
A. R. R. Carvalho, F. Mintert, and A. Buchleitner, Phys. Rev.
Lett. {\bf 93}, 230501 (2004).

\bibitem{barnum-2001}
H. Barnum and N. Linden, J. Phys. A: Math. Gen. {\bf 34}, 6787
(2001).

\bibitem{wei-2003}
T.-C. Wei and P. M. Goldbart, Phys. Rev. A {\bf 68}, 042307
(2003).

\bibitem{guhne-2008}
O. G\"{u}hne, F. Bodoky, and M. Blaauboer, Phys. Rev. A {\bf 78},
060301 (2008).

\bibitem{grimsmo-2012}
A. L. Grimsmo, S. Parkins, and B.-S. K. Skagerstam, Phys. Rev. A
{\bf 86}, 022310 (2012).

\bibitem{gheorghiu-2012}
V. Gheorghiu and G. Gour, Phys. Rev. A {\bf 86}, 050302 (2012).

\bibitem{man-2012}
Z.-X. Man, Y.-J. Xia, and S.-M. Fei, J. Phys. A: Math. Theor. {\bf
45}, 195306 (2012).

\bibitem{guhne-2010}
O. G\"{u}hne and M. Seevinck, New J. Phys. {\bf 12}, 053002
(2010).

\bibitem{huber-2010}
M. Huber, F. Mintert, A. Gabriel, and B. C. Hiesmayr, Phys. Rev.
Lett. {\bf 104}, 210501 (2010).

\bibitem{devicente-2011}
J. I. de Vicente and M. Huber, Phys. Rev. A {\bf 84}, 062306
(2011).

\bibitem{jungnitsch-2011}
B. Jungnitsch, T. Moroder, and O. G\"{u}hne, Phys. Rev. Lett. {\bf
106}, 190502 (2011).

\bibitem{novo-2013}
L. Novo, T. Moroder, and O. G\"{u}hne, Phys. Rev. A {\bf 88},
012305 (2013).

\bibitem{huber-2013}
M. Huber, M. Perarnau-Llobet, and J. I. de Vicente,
arXiv:1307.3541 [quant-ph].

\bibitem{bourennane-2004}
M. Bourennane, M. Eibl, C. Kurtsiefer, S. Gaertner, H. Weinfurter,
O. G\"{u}hne, P. Hyllus, D. Bru\ss, M. Lewenstein, and A. Sanpera,
Phys. Rev. Lett. {\bf 92}, 087902 (2004).

\bibitem{bodoky-2009}
F. Bodoky, O. G\"{u}hne, and M. Blaauboer, J. Phys.: Condens.
Matter {\bf 21}, 395602 (2009).

\bibitem{campbell-2009}
S. Campbell, M. S. Tame, and M. Paternostro, New J. Phys. {\bf
11}, 073039 (2009).

\bibitem{weinstein-2009}
Y. S. Weinstein, Phys. Rev. A 79, 012318 (2009).

\bibitem{altintas-2010}
F. Altintas and R. Eryigit, Phys. Lett. A {\bf 374}, 4283 (2010).

\bibitem{an-2011}
N. B. An, J. Kim, and K. Kim, Phys. Rev. A {\bf 84}, 022329
(2011).

\bibitem{siomau-2012}
M. Siomau, J. Phys. B {\bf 45}, 035501 (2012).

\bibitem{ryu-2012}
S. Ryu, S.-S. B. Lee, and  H.-S. Sim, Phys. Rev. A {\bf 86},
042324 (2012).

\bibitem{buscemi-2013}
F. Buscemi and P. Bordone, Phys. Rev. A {\bf 87}, 042310 (2013).

\bibitem{dur-1999}
W. D\"{u}r, J. I. Cirac, and R. Tarrach, Phys. Rev. Lett. {\bf
83}, 3562 (1999).

\bibitem{seevinck-2008}
M. Seevinck and J. Uffink, Phys. Rev. A {\bf 78}, 032101 (2008).

\bibitem{dur-2001}
W. D\"{u}r, Phys. Rev. A {\bf 63}, 020303 (2001).

\bibitem{eltschka-2012}
C. Eltschka and J. Siewert, Phys. Rev. Lett. {\bf 108}, 020502
(2012).

\bibitem{lougovski-2009}
P. Lougovski, S. J. V. Enk, K. S. Choi, S. B. Papp, H. Deng, and
H. J. Kimble, New J. Phys. {\bf 11}, 063029 (2009).

\bibitem{kampermann-2012}
H. Kampermann, O. G\"{u}hne, C. Wilmott, and D. Bru\ss, Phys. Rev.
A {\bf 86}, 032307 (2012).

\bibitem{horodecki-2003}
M. Horodecki, P. W. Shor, and M. B. Ruskai, Rev. Math. Phys. {\bf
15}, 629 (2003).

\bibitem{emerson-2007}
J. Emerson, M. Silva, O. Moussa, C. Ryan, M. Laforest, J. Baugh,
D. G. Cory, and R. Laflamme, Science {\bf 317}, 1893 (2007).

\bibitem{barreiro-2010}
J. T. Barreiro, P. Schindler,  O. G\"{u}hne, T. Monz, M. Chwalla,
C. F. Roos, M. Hennrich, and R. Blatt, Nat. Phys. {\bf 6}, 943
(2010).

\bibitem{lavoie-2010}
J. Lavoie, R. Kaltenbaek, M. Piani, and K. J. Resch, Phys. Rev.
Lett. {\bf 105}, 130501 (2010).

\bibitem{dur-2005}
W. D\"{u}r, M. Hein, J. I. Cirac, and H.-J. Briegel, Phys. Rev. A
{\bf 72}, 052326 (2005).

\bibitem{divincenzo-2003}
D. P. DiVincenzo, T. Mor, P. W. Shor, J. A. Smolin, and B. M.
Terhal, Commun. Math. Phys. {\bf 238}, 379 (2003).

\bibitem{smolin-2001}
J. A. Smolin, Phys. Rev. A {\bf 63}, 032306 (2001).

\bibitem{vedral-1997}
V. Vedral, M. B. Plenio, K. Jacobs, and P. L. Knight, Rhys. Rev. A
{\bf 56}, 4452 (1997).

\bibitem{gisin-1998}
N. Gisin and H. Bechmann-Pasquinucci, Phys. Lett. A {\bf 246} 1
(1998).

\bibitem{sorensen-2001}
A. S. S{\o}rensen and K. M{\o}lmer, Phys. Rev. Lett. {\bf 86},
4431 (2001).

\bibitem{guhne-toth-briegel-2005}
O. G\"{u}hne, G. T\'{o}th, and H. J. Briegel, New J. Phys. {\bf
7}, 229 (2005).

\bibitem{stinespring}
W. F. Stinespring, Proc. Amer. Math. Soc. {\bf 6}, 211 (1955).

\bibitem{choi-1975}
M.-D. Choi, Linear Algebra Appl. {\bf 10}, 285 (1975).

\bibitem{jamiolkowski-1972}
A. Jamio{\l}kowski, Rep. Math. Phys. {\bf 3}, 275 (1972).

\bibitem{holevo-giovannetti}
A. S. Holevo and V. Giovannetti,  Rep. Prog. Phys. {\bf 75},
046001 (2012).

\bibitem{moravcikova-ziman-2010}
L. Morav\v{c}\'{i}kov\'{a} and M. Ziman, J. Phys. A: Math. Theor.
{\bf 43}, 275306 (2010).

\bibitem{holevo-2008}
A. S. Holevo, Probl. Inf. Transm. {\bf 44}, 3 (2008).

\bibitem{filippov-rybar-ziman-2012}
S. N. Filippov, T. Ryb\'{a}r, and M. Ziman, Phys. Rev. A {\bf 85},
012303 (2012).

\bibitem{filippov-ziman-2013}
S. N. Filippov and M. Ziman, Phys. Rev. A {\bf 88}, 032316 (2013).

\bibitem{horodecki-1996}
M. Horodecki, P. Horodecki, and R. Horodecki, Phys. Lett. A {\bf
223}, 1 (1996).

\bibitem{horodecki-2001}
M. Horodecki, P. Horodecki, and R. Horodecki, Phys. Lett. A {\bf
283}, 1 (2001).

\bibitem{scott-grassl-2010}
A. J. Scott and M. Grassl, J. Math. Phys. {\bf 51}, 042203 (2010)
[supplemental material].

\bibitem{wootters-fields-1989}
W. K. Wootters and B. D. Fields, Ann. Phys. {\bf 191}, 363 (1989).

\bibitem{bengtsson-zyczkowski-2006}
I. Bengtsson and K. \.{Z}yczkowski, {\it Geometry of Quantum
States. An Introduction to Quantum Entanglement}, Section 8.1
(Cambridge University Press, New York, 2006).

\bibitem{hildebrand-2007}
R. Hildebrand, Phys. Rev. A {\bf 75}, 062330 (2007).

\end{thebibliography}
\end{document}